\newcommand{\wt}{\widetilde}
\newcommand{\bsx}{\boldsymbol{x}}
\newcommand{\bsX}{\boldsymbol{X}}
\newcommand{\E}{\mathbb{E}}
\newcommand{\Var}{\mathrm{Var}}
\newcommand{\odb}{\mathcal{O}}
\newcommand{\bern}{\mathrm{Bern}}
\newcommand{\phm}{\phantom{-}}
\newcommand{\sd}{\mathrm{SD}}
\newcommand{\bstau}{\boldsymbol{\tau}}
\newcommand{\htaur}{\boldsymbol{\hat \tau_r}}
\newtheorem{proposition}{Proposition}
\newtheorem{lemma}{Lemma}
\theoremstyle{definition} % Uses Roman text (not italic)
\newtheorem{assumption}{Assumption}
\newcommand{\diag}{\mathrm{diag}}
\newcommand{\OR}{\text{OR}}
\newcommand{\Conv}{\text{Conv}}
\newcommand{\bsv}{\boldsymbol{v}}
\begin{document}

\title{Designing Experiments Informed \\by Observational Studies}
\author{Evan Rosenman, Art Owen}
\maketitle

\abstract{

The increasing availability of passively observed data has yielded a growing methodological interest in ``data fusion." These methods involve merging data from observational and experimental sources to draw causal conclusions -- and they typically require a precarious tradeoff between the unknown bias in the observational dataset and the often-large variance in the experimental dataset. We propose  an alternative approach to leveraging observational data, which avoids this tradeoff: rather than using observational data for inference, we use it to design a more efficient experiment. 

We consider the case of a stratified experiment with a binary outcome, and suppose pilot estimates for the stratum potential outcome variances can be obtained from the observational study. We extend results from \cite{zhao2019sensitivity} in order to generate confidence sets for these variances, while accounting for the possibility of unmeasured confounding. Then, we pose the experimental design problem as one of regret minimization, subject to the constraints imposed by our confidence sets. We show that this problem can be converted into a convex minimization and solved using conventional methods. Lastly, we demonstrate the practical utility of our methods using data from the Women's Health Initiative. 

\tableofcontents 
\section{Introduction}

The past half-century of causal inference research has engendered a healthy skepticism toward observational data \citep{Imbens:2015:CIS:2764565}. In observational data sets, researchers do not control whether or not each individual receives a treatment of interest. Hence, they cannot be certain that treated individuals and untreated individuals are otherwise comparable. 

This challenge can be overcome only if the covariates measured in the observational data are sufficiently rich to fully explain who receives the treatment and who does not. This is a fundamentally untestable assumption -- and even if it holds, careful modeling is necessary to remove the selection effect. The applied literature includes myriad examples of treatments that showed promise in observational studies only to be overturned by later randomized trials \citep{hartman2015sate}. One prominent case, the effect of hormone therapy on the health of postmenopausal women, will be discussed in this manuscript \citep{writing2002risks}.  

%In the optimistic view, these observational data can provide insight into the causal effect of a proposed treatment, such as a novel drug regimen or a new marketing strategy. If the strongest assumptions hold, such data can be used to identify any desired causal effect, obviating the need to run randomized trials. This prospect has yielded recent conjecture that Big Data may supplant experimentation as the future of decision-making \citep{Bareinboim7345}. 

The ``virtuous" counterpart to observational data is the well-designed experiment. Data from a randomized trial yield unbiased estimates of a causal effect without the need for problematic statistical assumptions. Yet experiments are frequently expensive, and, as a consequence, generally involve fewer units. Especially if one is interested in subgroup causal effects, this means experimental estimates can be imprecise. 

In this paper, we discuss an approach that allows us to leverage the  availability of observational data, while retaining the attractive unbiasedness properties of randomized experiments: we use the observational data not for inference, but rather to influence the design of the experiment. Our discussion will be limited to settings with binary outcomes, in which computations are tractable. We suppose the experiment has a stratified design, and seek to determine allocations of units to strata and treatment assignments. 

Suppose pilot estimates of the stratum potential outcome variances are obtained from the observational study. If the outcomes are binary, we show that recent advances in sensitivity analysis from \cite{zhao2019sensitivity} can be extended to generate confidence sets for these variances, while incorporating the possibility of unmeasured confounding. Next, we pose the experimental design problem as one of regret minimization subject to the potential outcome variances lying within their confidence sets. We use a trick from von Neumann to convert the problem into a convex (though non-DCP) minimization, which can be solved using projected gradient descent. This approach can yield modest efficiency gains in the experiment, especially if there is heterogeneity in treatment effects and baseline incidence rates across strata. 

%We consider the case of a stratified experiment with a binary outcome, and suppose pilot estimates for the stratum potential outcome variances can be obtained from the observational study. We extend results from \cite{zhao2019sensitivity} in order to generate confidence sets for these variances, while accounting for the possibility of unmeasured confounding. Then, we pose the experimental design problem as one of regret minimization, subject to the constraints imposed by our confidence sets. We show that a trick from von Neumann can be used to convert the problem into a convex (though non-DCP) minimization problem, which can be solved using projected gradient descent. Lastly, we demonstrate the practical utility of our methods using data from the Women's Health Initiative. 

The remainder of the paper proceeds as follows. Section \ref{sec:problemSetup} defines our notation, assumptions, and loss function. Section \ref{sec:results} gives our main results. These include the derivation of bias-aware confidence sets for the pilot variance estimates; the formulation of the design problem as a regret minimization; and the strategy to convert that problem into a computationally tractable one. We demonstrate the practical utility of our methods on data from the Women's Health Initiative in Section \ref{sec:whi}. Section \ref{sec:conc} discusses future work and concludes. 

\section{Problem Set-Up}\label{sec:problemSetup}

\subsection{Sources of Randomness}

We suppose we have access to an observational study with units $i$ in indexing set $\mathcal{O}$ such that $|\mathcal{O}| = n_o$. We associate with each unit $i \in \mathcal{O}$ a pair of unseen potential outcomes $(Y_i(0), Y_i(1))$; an observed covariate vector $X_i$ where $X_i \in \mathbb{R}^p$; a propensity score $p_i \in (0, 1)$ denoting that probability of receiving treatment. We also associate with each $i$ a treatment indicator $W_i$ and an observed outcome defined by $Y_i = W_i Y_i(1) + (1-W_i) Y_i(0)$. 

There are multiple perspectives on randomness in causal inference. In the setting of \cite{rubin1974estimating} -- as in much of the early potential outcomes literature -- all quantities are treated as fixed except, the treatment assignment $W_i$. More modern approaches sometimes treat the potential outcomes $Y_i(0)$ and $Y_i(1)$ and covariates $X_i$ as random variables \citep[see e.g.][]{vanderweele2012stochastic}. Similarly, some authors treat all of the data elements (including the treatment assignment $W_i$) as random draws from a super-population \citep[see e.g.][]{Imbens:2015:CIS:2764565}. Per the discussion in \cite{chin2019modern}, these subtleties often have little effect on the choice of estimators, but they do affect the population to which results can be generalized. 

In our setting, we assume that the RCT data has not yet been collected, so it does not make sense to talk about their fixed potential outcomes. More naturally, we treat the potential outcomes and covariates as random. Thus, we view $X_i, Y_i(0), Y_i(1)$ as drawn from a joint distribution $F_O$. The RCT data will be denoted (with a slight abuse of notation) as $(Y_i(0), Y_i(1), X_i)$ for $i \in \mathcal{R}$, sampled from a joint distribution $F_R$. Because we are treating the potential outcomes as random variables, we can reason about their means and variances under the distribution $F_R$. 

\subsection{Stratification and Assumptions}

We will make the following assumptions about allocation to treatment.

\setcounter{assumption}{0}

\begin{assumption}[Allocations to Treatment]
For $i \in \mathcal{O}$, $W_i \sim \bern(p_i)$ for $p_i$. For $i \in \mathcal{R}$, treatment is allocated via a simple random sample of size $n_{rkt}$ for $k = 1, \dots, K$.
\end{assumption}

We suppose we have a fixed stratification scheme based on the covariates $X_i$. This can be derived from substantive knowledge or from applying a modern machine learning algorithm on the observational study to uncover treatment heterogeneity \citep[e.g.][]{wager2018estimation, hill2011bayesian}. The stratification is such that there are $k = 1, \dots, K$ strata and each has an associated population weight $w_1, \dots, w_K$.  Using the stratification on the observational study, we define indexing subsets $\mathcal{O}_k$ with cardinalities $n_{ok}$ to identify units in each stratum. For each stratum, define $\mathcal{I}_k$ as the set of covariate values defining the stratum, such that $X_i \in \mathcal{I}_k \implies i \in \mathcal{O}_k$.

Suppose we can recruit only $n_r$ total units for the RCT. We need to decide both the number of units $n_{rk}$ recruited for each stratum, subject to the constraint $\sum_k n_{rk} = n_r$, and the count of units we will assign to treatment vs. control in each stratum, such that the associated counts $n_{rkt}$ and $n_{rkc}$ sum to $n_{rk}$. Hence, our variables of interest will be $\{(n_{rkt}, n_{rkc})\}_1^K$.

Define $\E_R, \Var_R, \E_O,$ and $\Var_O$ as expectations and variances under the distributions $F_R$ and $F_O$, respectively. We will need two further assumptions. 

\begin{assumption}[Common Potential Outcome Means]\label{ass:commonMeans}
Conditional on the stratum, the potential outcome averages for the two populations are equal. In other words,
\begin{align*}
\E_R(Y_i(0) \mid X_i \in \mathcal{I}_k) &= \E_O(Y_i(0) \mid X_i \in \mathcal{I}_k) \hspace{4mm} \text{ and }  \\
\E_R(Y_i(1) \mid X_i \in \mathcal{I}_k) &= \E_O(Y_i(1) \mid X_i \in \mathcal{I}_k) 
\end{align*}
for all $k \in 1, \dots, K$. We denote these shared quantities as $\mu_k(0)$ and $\mu_k(1)$. 
\end{assumption}

\begin{assumption}[Common Potential Outcome Variances]\label{ass:commonVars}
Conditional on the stratum, the potential outcome means for the two populations are equal. In other words,
\begin{align*}
\Var_R(Y_i(0) \mid X_i \in \mathcal{I}_k) &= \Var_O(Y_i(0) \mid X_i \in \mathcal{I}_k) \hspace{4mm} \text{ and }  \hspace{4mm} \\\Var_R(Y_i(1) \mid X_i \in \mathcal{I}_k) &= \Var_O(Y_i(1) \mid X_i \in \mathcal{I}_k)
\end{align*} 
for all $k \in 1, \dots, K$. We denote these shared quantities as $\sigma_k^2(0)$ and $\sigma_k^2(1)$. 
\end{assumption}

\subsection{Loss and Problem Statement}

Given Assumption \ref{ass:commonMeans}, we can define a mean effect,
\[ \tau_k = \E_R(Y_i(1) - Y_i(0) \mid X_i \in \mathcal{I}_k) = \E_O(Y_i(1) - Y_i(0) \mid X_i \in \mathcal{I}_k) = \mu_k(1) - \mu_k(0) \] 
for each $k \in 1, \dots, K$. We can collect these values into a vector $\bstau$. 

Denote the associated causal estimates derived from the RCT as $\hat \tau_{rk}$ for $k = 1, \dots, K$. We can collect these estimates into a vector $\htaur$. We use a weighted $L_2$ loss when estimating the causal effects across strata, 

\[ \mathcal{L}(\bstau, \htaur) =  \sum_k w_k \left( \hat \tau_k - \tau_k \right)^2 \,.\]
 
Our goal will be to minimize the risk, defined as an expectation of the loss over both the treatment assignments and the potential outcomes. For simplicity, we suppress the subscript and write
\begin{align*}
\mathit{R}(\bstau, \htaur) &=  \E \left(  \sum_k w_k \left( \hat \tau_k - \tau_k \right)^2 \right) \\
&=  \sum_k w_k \left(\frac{\sigma_{k}^2(1)}{n_{rkt}} +\frac{\sigma_{k}^2(0)}{n_{rkc}}\right) \,.
\end{align*}

\section{Converting to an Optimization Problem}\label{sec:results}

\subsection{Na\"ive Approach}

Were $\left(\sigma_k^2(1), \sigma_k^2(0)\right)_{k = 1}^K$ known exactly, it would be straightforward to compute optimal allocations in the RCT.
The optimal choice from minimizing this quantity is simply: 
\begin{equation}\label{eq:alloc}
n_{rkt} = n _r \frac{\sqrt{w_k} \sigma_k(1)}{\sum_j \sqrt{w_k} (\sigma_j(1) + \sigma_j(0)) }, \hspace{5mm} n_{rkc} = n _r \frac{\sqrt{w_k} \sigma_k(0)}{\sum_j \sqrt{w_k} (\sigma_j(1) + \sigma_j(0)) }
\end{equation} 
which yields a risk of 
\[ \frac{1}{n_r} \left( \sum_k \sqrt{w_k} \bigg(\sigma_k(1) + \sigma_k(0)\bigg) \right)^2 \,. \] 

Assumption \ref{ass:commonVars} guarantees shared variance across the observational and RCT datasets. So we might be tempted to obtain pilot estimates of $\sigma_k^2(1)$ and $\sigma_k^2(0)$ from the observational study and then to plug them in to determine the allocation of units in the RCT. However, any estimate of the variances derived from the observational study should be treated with caution. Our assumptions do not preclude the possibility of unmeasured confounding, which can introduce substantial bias into the pilot estimation step. Hence, a framework that exclusively optimizes expected loss is incongruent with what we know about sources of uncertainty. 

\subsection{Regret Minimization}

Decision theory provides an attractive framework in the form of regret minimization, originally attributed to \cite{bell1982regret}, as well as \cite{loomes1982regret}. In this framework, a decision-maker chooses between multiple prospects, and cares not only about the received payoff but also about the foregone choice. If the foregone choice would have yielded higher payoff than the chosen one, the decision-maker experiences regret \citep{diecidue2017regret}. Decisions are made to minimize the maximum possible regret. 

In our case, the decision is in how to allocate units in our RCT. One choice is an allocation informed by the observational study. The other is a ``default" allocation against which we seek to compare. Denote the default values as $\tilde n_{rkt}$ and $\tilde n_{rkc}$, where a common choice would be equal allocation, $\tilde n_{rkt} = \tilde n_{rkc} = n_r/2K$ for all $k$; or weighted allocation $\tilde n_{rkt} = \tilde n_{rkc} = w_k n_r$ for all $k$. 

Regret is defined as the difference between the risk of our chosen allocation and the default allocation,
\[ \text{Regret}\left( \{n_{rkt}, n_{rkc}\}_{k = 1}^K \right) = \sum_k w_k \left(\sigma_k^2(1) \left(\frac{1}{n_{rkt}}- \frac{1}{\tilde n_{rkt}}\right) + \sigma_k^2(0) \left(\frac{1}{n_{rkc}}- \frac{1}{\tilde n_{rkc}}\right)\right) \,.\] 
Choosing this as our objective, we can now begin to formulate an optimization problem. 

Suppose we can capture our uncertainty about $(\sigma_k^2(1), \sigma_k^2(0))$ via a convex constraint, indexed by a user-defined parameter $\Gamma$,
\[ (\sigma_k^2(1), \sigma_k^2(0)) \in \mathcal{A}_k^{(\Gamma)}, \hspace{5mm} k = 1, \dots, K\,, \] 
where $ \mathcal{A}_k^{(\Gamma)} \subset \mathbb{R}^2$. We could then obtain the regret-minimizing unit allocations as the solution to
\begin{equation}\label{optProb}
\begin{aligned}
\min_{n_{rkt}, n_{rkc}} \max_{\sigma_{k}^2(1), \sigma_{k}^2(0)} & \hspace{5mm}  \sum_k w_k \left(\sigma_k^2(1) \left(\frac{1}{n_{rkt}}- \frac{1}{\tilde n_{rkt}}\right) + \sigma_k^2(0) \left(\frac{1}{n_{rkc}}- \frac{1}{\tilde n_{rkc}}\right)\right) \\ \text{subject to} & \hspace{5mm} (\sigma_k^2(1), \sigma_k^2(0)) \in \mathcal{A}_k^{(\Gamma)},  k = 1, \dots, K\\ & \hspace{5mm} \sum_k n_{rkt} + n_{rkc} = n_r \,.
\end{aligned}
\end{equation}
Defining and solving Optimization Problem \ref{optProb} will be the goal of the remainder of this paper.

\subsection{Tractable Case: Binary Outcomes}\label{subsec:tractable}

To construct our confidence regions $\mathcal{A}_k, k = 1, \dots, K$, we will extend recent sensitivity analysis results from  \cite{zhao2019sensitivity}. 

The authors consider the case of causal estimation via stabilized inverse probability of treatment weighting (SIPW). Zhao and co-authors focus on observational studies, and consider the case where unmeasured confounding is present. To quantify this confounding, they propose a marginal sensitivity model indexed by a quantity $\Gamma$, which bounds the odds ratio between the true treatment probability (a function of the covariates and the potential outcomes) and the treatment probability marginalized over the potential outcomes (a function of the covariates only). Their method extends the widely-used Rosenbaum sensitivity model \citep{rosenbaum1987sensitivity}. 

The authors' focus is on developing valid confidence intervals for the average treatment effect even when $\Gamma$-level confounding may be present. They offer two key insights. First, they demonstrate that for any choice of $\Gamma$, one can efficiently compute upper and lower bounds on the true potential outcome means via linear fractional programming. These bounds, referred to as the ``partially identified region," quantify the possible bias in the point estimate of the ATE. Second, the authors show that the bootstrap is valid in this setting. Hence, they propose drawing repeated bootstrap replicates; computing extrema within each replicate using their linear fractional programming approach; and then taking the relevant $\alpha$-level quantiles of these extrema. This procedure yields a valid $\alpha$-level confidence region for the ATE. 

We adapt this approach to our setting in the case of binary outcomes. Note that if $Y_i \in \{0, 1\}$, then potential outcome variances can be expressed directly as a function of potential outcome means, via 
\[ \sigma_k^2(1) = \mu_k(1)\cdot (1 - \mu_k(1)) \hspace{3mm} \text{ and } \hspace{3mm} \sigma_k^2(0) = \mu_k(0) \cdot(1-\mu_k(0))\,. \] 

As \cite{zhao2019sensitivity} provides the necessary machinery to bound mean estimates, we can exploit this relationship between the means and variances to bound variance estimates. In particular, we can show that the bootstrap is also valid if our estimand is $\mu_k(e)\cdot(1 - \mu_k(e))$, rather than $\mu_k(e)$, for $e \in \{0, 1\}$ and $k = 1, \dots K$. Computing the extrema is also straightforward. Note that the function $f(x) = x\cdot(1-x)$ is monotonically increasing in $x$ if $0 < x < 0.5$ and monotonically decreasing in $x$ if $0.5 < x < 1$. Hence, if we use the \cite{zhao2019sensitivity} method to solve for a partially identified region for $\mu_k(1)$ and $\mu_k(0)$, we can equivalently compute such intervals for $\sigma_k^2(1)$ and $\sigma_k^2(0)$. 

Denote as $\hat \mu_k^U(e)$ the upper bound and $\hat \mu_k^L(e)$ the lower bound computed for a mean for $e \in \{0, 1\}$. Denote $(\hat \sigma_k^2(e))^U$ and $(\hat \sigma_k^2(e))^L$ as the analogous quantities for variance. We apply the following logic:
\begin{itemize}
    \item If $\hat\mu_k^U(e) \leq 0.5$, set 
 \[ (\hat \sigma_k^2(e))^L = \mu_k^L(e)(1-\hat \mu_k^L(e)) \hspace{3mm} \text{ and } \hspace{3mm} (\hat \sigma_k^2(e))^U = \mu_k^U(e)(1-\hat \mu_k^U(e))\,. \] 
    \item If $\hat\mu_k^L(e) \geq 0.5$, set 
 \[ (\hat \sigma_k^2(e))^L = \mu_k^U(e)(1-\hat \mu_k^U(e)) \hspace{3mm} \text{ and } \hspace{3mm} (\hat \sigma_k^2(e))^U = \mu_k^L(e)(1-\hat \mu_k^L(e)) \,. \] 
    \item If $\hat\mu_k^L(e) < 0.5$ and $\hat\mu_k^U(e) > 0.5$, set
    \[ (\hat \sigma_k^2(e))^L = \min\left( \hat \mu_k^L(e)(1-\hat \mu_k^L(e)), \hat \mu_k^U(e)(1-\hat \mu_k^U(e))\right) \hspace{3mm} \text{ and } \hspace{3mm} (\hat \sigma_k^2(e))^U = 0.25 \,.\]
\end{itemize}

Hence, we propose the following procedure for deriving valid confidence regions for $(\sigma_k^2(0), \sigma_k^2(1))$ for each choice of $k$: 
\begin{enumerate}
\item Draw $B$ bootstrap replicates from the units $i \in \mathcal{O}_k$. 
\item For each replicate: 
\begin{itemize}
\item Compute $\mu_k^U(e), \mu_k^L(e)$ for $e \in \{0, 1\}$ using Zhao and co-authors' linear fractional programming approach. 
\item Determine $ (\hat \sigma_k^2(e))^U$ and $ (\hat \sigma_k^2(e))^L$ for $e \in \{0, 1\}$ using the approach described above. 
\end{itemize}
\item Each replicate can now be represented as a rectangle in $[0, 1] \times [0, 1]$, where one axis represents the value of $(\hat \sigma_k^2(1))$, and the other the value of $(\hat \sigma_k^2(0))$ and the vertices correspond to the extrema. Any set such that a $1-\alpha$ proportion of the rectangles have all four corners included in the set will asymptotically form a valid $\alpha$-level confidence interval. 
\end{enumerate}
A full proof of the validity of this method can be found in Appendix \ref{app:validConfRegions}.

Note that the final step does not specify the shape of the confidence set (it need not even be convex). For simplicity, we compute the minimum volume ellipsoid containing all vertices, then shrink the ellipsoid toward its center until only $B\cdot(1-\alpha)$ of the rectangles have all four of their vertices included. For details on constructing the ellipsoids (sometimes known as L\"{o}wner-John ellipsoids), see \cite{boyd2004convex}. Observe that this is by no means the smallest valid confidence set, but it is convex and easy to work with numerically. 

In Figure \ref{fig:rectplot}, we demonstrate this procedure on simulated data using $\Gamma = 1.2$. We suppose there are four strata, each containing 1,000 observational units. The strata differ in their treatment probabilities with 263, 421, 564, and 739 units in each stratum, respectively. The large black dot at the center of each cluster represents the point estimate $(\hat \sigma_k^2(0), \hat \sigma_k^2(1))$. In purple, we plot the rectangles corresponding to the extrema computed in each of 200 bootstrap replicates drawn from the data. The dashed ellipsoids represent 90\% confidence sets. In the cases of strata 2 and 4, the ellipsoids extend beyond the upper bound of 0.25 in at least one direction, so we intersect the ellipsoids with the hard boundary at 0.25. The resulting final confidence sets, $\mathcal{A}_1, \mathcal{A}_2, \mathcal{A}_3,$ and $\mathcal{A}_4$,  are all convex. 

\begin{figure}[h]
\centering
\includegraphics[scale = 0.30]{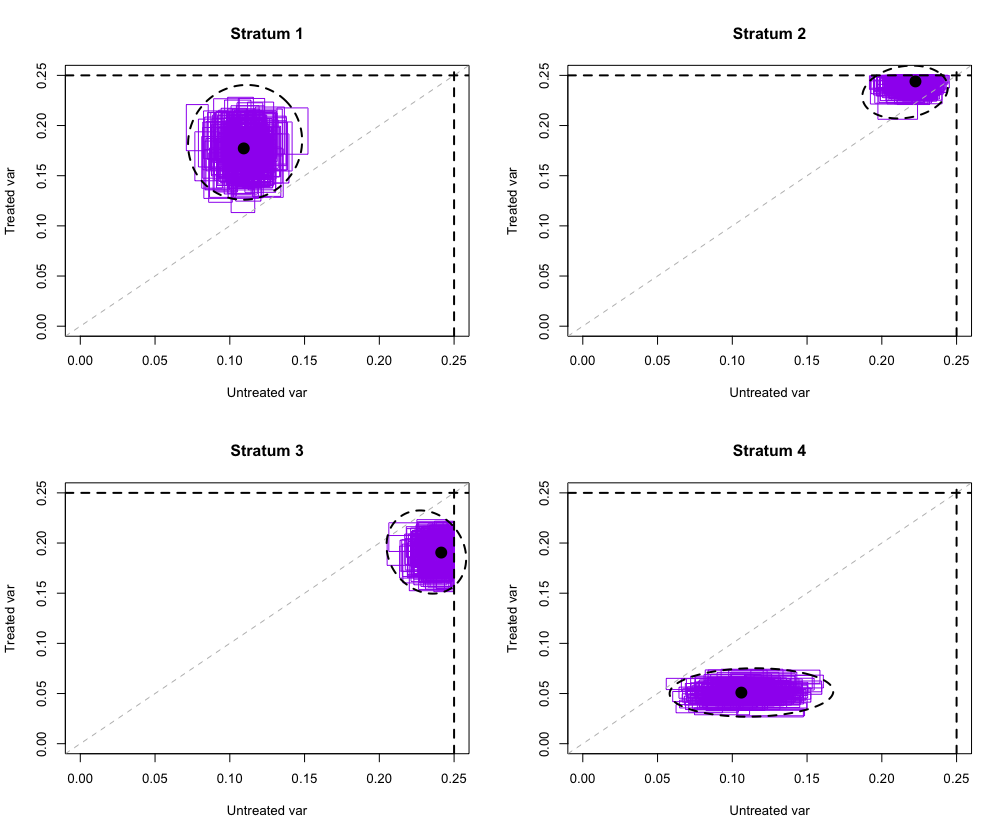}
\caption{\label{fig:rectplot} Simulated example of confidence regions in four strata under $\Gamma = 1.2$. }
\end{figure}

The objective is convex in $n_{rkt}, n_{rkc}$ and affine (and thus concave) in $\sigma^2_{k}(1), \sigma^2_{k}(0)$. Now, having obtained convex constraints, we can invoke Von Neumann's minimax theorem \citep{neumann1928theory} to switch the order of the minimization and maximization. Hence, the solution to Problem \ref{optProb} is equivalent to the solution of 

\begin{align*}
\max_{\sigma_{k}^2(1), \sigma_{k}^2(0)} \min_{n_{rkt}, n_{rkc}}  & \hspace{5mm}  \sum_k w_k \left( \sigma_k^2(1) \left(\frac{1}{n_{rkt}}- \frac{1}{\tilde n_{rkt}}\right) + \sigma_k^2(0) \left(\frac{1}{n_{rkc}}- \frac{1}{\tilde n_{rkc}}\right)\right) \\ \text{subject to} & \hspace{5mm} (\sigma_k^2(1), \sigma_k^2(0)) \in \mathcal{A}_k^{(\Gamma)}, k = 1, \dots, K\\ & \hspace{5mm} \sum_k n_{rkt} + n_{rkc} = n_r \,.
\end{align*}
But the inner problem has an explicit solution, given by
\[ n_{rkt} = n_r\frac{ \sqrt{w_k}  \sigma_k(1)}{\sum_k \sqrt{w_k} (\sigma_k(1) + \sigma_k(0))}, \hspace{5mm} n_{rkc} = n_r\frac{\sqrt{w_k} \sigma_k(0)}{\sum_k \sqrt{w_k} (\sigma_k(1) + \sigma_k(0))}\,.\] 
Plugging this in yields the simplified problem 
\begin{equation}\label{simplifiedProb}
\begin{aligned}
\max_{\sigma_{k}^2(1), \sigma_{k}^2(0)}  & \hspace{5mm} \frac{1}{n_r} \left( \sum_k \sqrt{w_k} \left(\sigma_k(1) + \sigma_k(0) \right)\right)^2 -  \left(\sum_k  w_k\left(\frac{\sigma_k^2(1)}{\tilde n_{rkt}} +  \frac{\sigma_k^2(0)}{\tilde n_{rkc}} \right)\right)
\\ \text{subject to} & \hspace{5mm} (\sigma_k^2(1), \sigma_k^2(0)) \in \mathcal{A}_k^{(\Gamma)}, k = 1, \dots, K\,.
\end{aligned}
\end{equation}
Problem \ref{simplifiedProb} is concave. See Appendix \ref{app:concavityProof} for a detailed proof. The solution is non-trivial, owing to the fact that the problem is not DCP-compliant. Nonetheless, a simple projected gradient descent algorithm is guaranteed to converge under very mild conditions given the curvature \citep{iusem2003convergence}. Hence, we can efficiently solve this problem.  

\section{Application to the Data from the Women's Health Initiative}\label{sec:whi}

\subsection{Setup}

To evaluate our methods in practice, we make use of data from the Women's Health Initiative, a 1991 study of the effects of hormone
therapy on postmenopausal women. The study included both a randomized controlled trial and an observational study. A total of 16,608 women were included in the trial, with half randomly selected to take 625 mg of estrogen and 2.5 mg of progestin, and the remainder receiving a placebo. 
A corresponding 53,054 women in the observational component of the WHI were deemed clinically comparable to women in the trial. About a third of these women were using estrogen plus progestin, while the remaining women in the observational study were not using hormone therapy \citep{prentice2005combined}. 

We investigate the effect of the treatment on incidence of coronary heart disease. The data is split into two non-overlapping subsets, which we term the ``gold" and ``silver" datasets. We estimate the probability of treatment for observational units via fitted propensity scores. The data split is the same as the one used in \cite{rosenman2018propensity}. Details on the construction of these data elements can be found in the Appendix, Section \ref{sec:whiPropScores}, while further details about the WHI can be found in the Supplement, Section \ref{sec:furtherDetailsWHI}. 

To choose our subgroups for stratification, we utilize the clinical expertise of researchers in the study's writing group. The trial protocol highlights age as an important subgroup variable to consider \citep{study1998design}, while subsequent work considered a patient's history of cardiovascular disease \citep{roehm2015reappraisal}. %We also consider a variable that does \emph{not} have strong association with incidence or treatment effect.
We also consider Langley scatter, a measure of solar irradiance at each woman's enrollment center, which is not plausibly related to baseline incidence or treatment effect. Langley scatter exhibits no association with the outcome in the observational control population: a Pearson's Chi-squared test yields a p-value of 0.89. The analogous tests for age and history of cardiovascular disease have p-values below $10^{-5}$. 

%We consider randomized controlled trials stratified by all possible combinations of these variables. 
The age variable has three levels, corresponding to whether a woman was in her fifties, sixties, or seventies. The cardiovascular disease history variable is binary. The Langley scatter variable has five levels, corresponding to strata between 300 and 500 Langleys of irradiance.  We provide brief summaries of these variables in Tables \ref{table:ageVarDist}, \ref{table:CVDVarDist}, and  \ref{table:LangleyVarDist} in Appendix Section \ref{sec:suppWhi}. 

The RCT gold dataset is used to estimate ``gold standard" stratum causal effects. We now suppose that the observational study is being used to design an experiment of size $n_r = 1,000$ units. We compare the estimates from the designed pseudo-experiments against the gold standard estimates under the unweighted $L_2$ loss.

%To assess our methods' efficacy, we again return to the Women's Health Initiative Data. The set-up for our simulations will be virtually identical to those in Section \ref{subsec:whiChap3}. We will use the original propensity scores and the same split of the RCT data into ``gold" and ``silver" subsets from Chapter \ref{sec:whiChap2}. 

In the design setting, we face the additional challenge of choosing the appropriate value of $\Gamma$. The WHI provides a very rich set of covariates, and our propensity model incorporates more than 50 variables spanning the demographic and clinical domains (see details in Appendix Section \ref{sec:whiPropScores}). Hence, we will run our algorithm at values of $\Gamma = 1.0$ (reflecting no residual confounding) as well as $\Gamma = 1.1, 1.5,$ and $2.0$ (reflecting a modest amount). 

\subsection{Detailed Example: $\Gamma = 1.5$, Fine Stratification}

We show one example in detail, in which we choose $\Gamma = 1.5$ and stratify on all three subgroup variables: age, history of cardiovascular disease, and Langley scatter. The cross-product of these variables yields 30 subgroups, which we suppose are weighted equally. We number these groups from 1 through 30. 

In the top panel of Figure \ref{fig:AllocationPlot}, we show a na\"ive RCT allocation based purely on the pilot estimates of the stratum potential outcome variances from the observational study. In the bottom panel, we show the regret-minimizing allocations. Visually, it is clear that we have heavily shrunk the allocations toward an equally allocated RCT, but there remain some strata where we recommend over- or under-sampling. Note, too, that the shrinkage is not purely reflective of the magnitude of the pilot estimate, since the number of observational units from each stratum and treatment status also influences the width of our confidence regions for each of the pilot estimates. 

\begin{figure}[H]
\centering
\includegraphics[scale = 0.4]{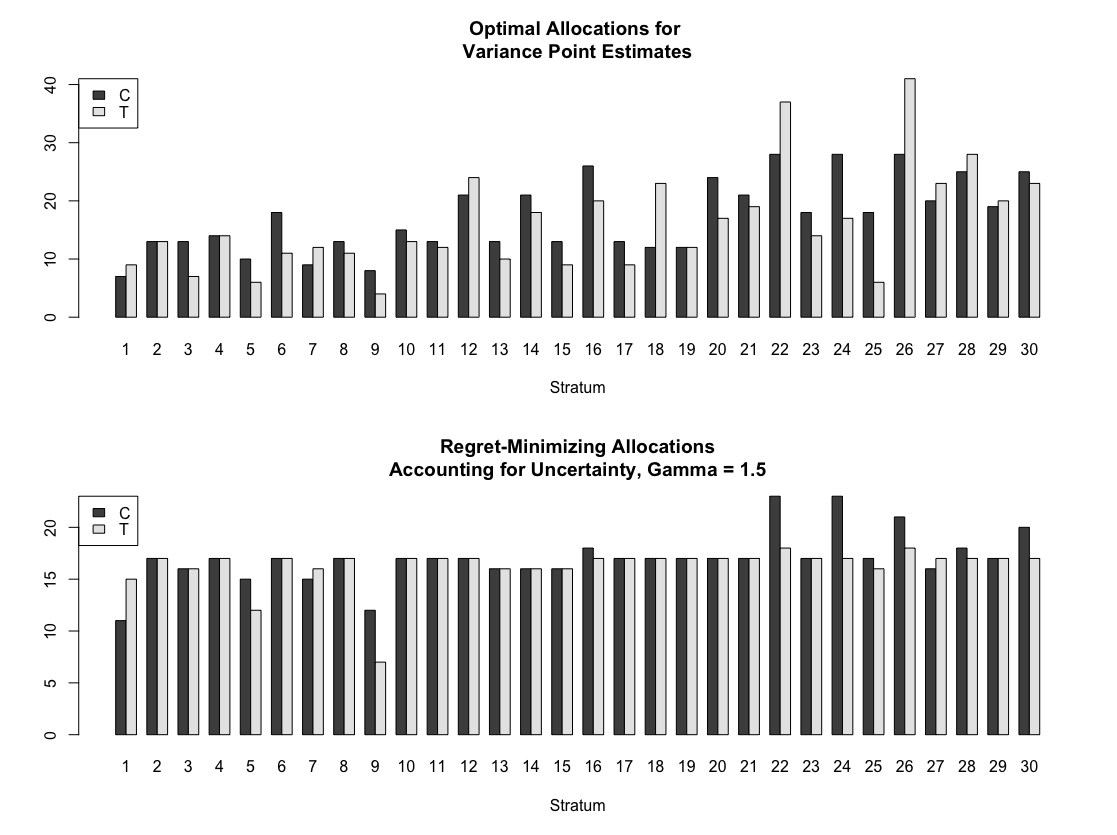}
\caption{\label{fig:AllocationPlot} Allocation of units to strata under na\"ive scheme and regret-minimizing scheme. }
\end{figure}

To investigate the utility of our regret-minimizing allocations, we sample pseudo-experiments of 1,000 units from the RCT silver dataset 1,000 times with replacement. We do so under three designs: equal allocation by strata; na\"ive allocation based on the pilot estimates; and  regret-minimizing allocations. Below, we show the average $L_2$ loss when compared against the gold standard estimates derived from the RCT gold estimate. Results are shown in Figure \ref{fig:avgLossbyDesignType}. Our method yields a modest reduction in average loss (3.6\%) relative to the na\"ive design. It also outperforms the equal design, though by a slimmer margin (1.6\%). This is encouraging -- especially because the design was intended to guard against \emph{worst-case} loss, rather than to optimize average loss. 

\begin{figure}[H]
\centering
\includegraphics[scale = 0.25]{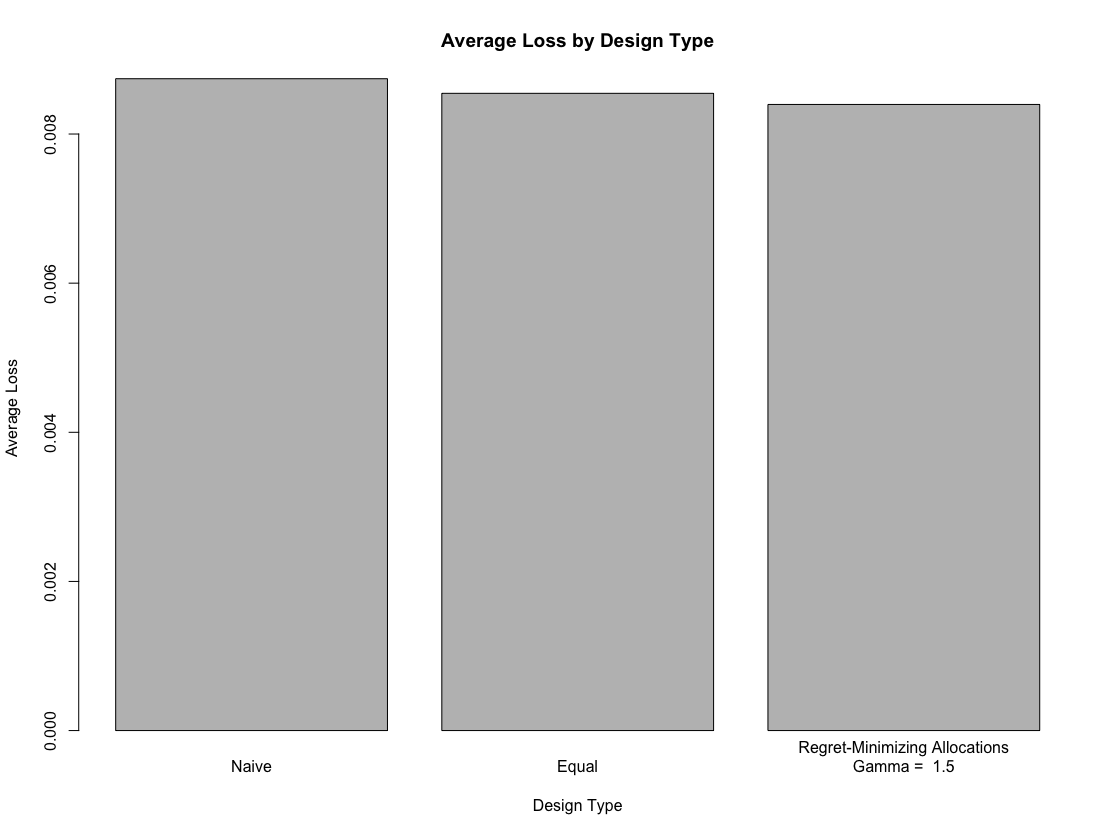}
\caption{\label{fig:avgLossbyDesignType} Average loss over 1,000 resamples of 1,000-units experiments under equal-allocation, na\"ive-allocation, and regret-minimizing allocation designs.}
\end{figure}

\subsection{Performance Over Multiple Conditions}

We now simulate with all possible combinations of the stratification variables. For each choice of a stratification, we select $1,000$ units under equal allocation, na\"ive allocation, and regret-minimizing allocation with $\Gamma = 1.0, 1.1, 1.5$ and $2.0$. We then compute the $L_2$ loss versus the ``gold standard" estimates derived from the RCT gold datasets.

In Table \ref{tab:comparisonToEqualAlloc}, we summarize the loss relative to equal allocation. We see immediately that the entries are all non-positive. This makes some intuitive sense: the objective in Problem \ref{optProb} can always be set to 0 by choosing $n_{rkt} = \tilde n_{rkt}$ and $n_{rkc} = \tilde n_{rkc}$ for all $k$; hence, the algorithm is designed to guarantee that we cannot do worse than allocating equally. By the same token, many of the gains we see are modest, owing to the conservatism of the regret-minimizing approach. Notably, we seem to achieve the greatest gains when we are stratifying only on clinically relevant variables and using a relatively low value of $\Gamma$. We achieve a 5-6\% risk reduction at low values of $\Gamma$ in the fourth row of the table, in which we stratify on the clinically relevant age and cardiovascular disease variables. On the other hand, the algorithm quickly defaults to recommending equal allocation when variables are not clinically relevant. In the third row, in which we stratify only on the irrelevant Langley scatter variable, the starred entries correspond to cases in which the regret-minimizing allocation is equal allocation. 

\begin{table}[h]
\begin{tabular}{lrrrrr}
\multirow{2}{*}{\begin{tabular}[c]{@{}l@{}}Subgroup \\ Variable(s)\end{tabular}} & \multicolumn{1}{l}{\multirow{2}{*}{\begin{tabular}[c]{@{}l@{}}Equal Alloc\\ $L_2$ Loss\end{tabular}}} & \multicolumn{4}{c}{Loss Relative to Equal Allocation}                                                                      \\ \cline{3-6} 
                                                                                 & \multicolumn{1}{r}{}                                                                                   & \multicolumn{1}{r}{\hspace{5mm} $\Gamma = 1$} & \multicolumn{1}{r}{\hspace{5mm} $\Gamma = 1.1$}& \multicolumn{1}{r}{\hspace{5mm} $\Gamma = 1.5$} & \multicolumn{1}{r}{\hspace{5mm} $\Gamma = 2$}  \\ \hline
Age                                                                              & 0.000517                                                                                               & $-$2.0\%                        & $-1.9\%\phantom{*}$                          & $-2.0\%\phantom{*}$                          & $0.0\%\phantom{*}$                         \\
CVD                                                                              & 0.000498                                                                                               & $-$2.3\%                        & $-2.0\%\phantom{*}$                          & $-1.5\%\phantom{*}$                          & $0.0\%\phantom{*}$                         \\
Langley                                                                          & 0.000841                                                                                               & 0.0\%                         & 0.0\%*                           & 0.0\%*                           & 0.0\%*                         \\
Age, CVD                                                                         & 0.001541                                                                                               & $-$5.5\%                        & $-5.6\%\phantom{*}$                          & $-3.8\%\phantom{*}$                          & $-2.3\%\phantom{*}$                        \\
Age, Langley                                                                     & 0.003417                                                                                               & $-$1.6\%                        & $-1.6\%\phantom{*}$                          & $-0.7\%\phantom{*}$                          & $-0.1\%\phantom{*}$                        \\
CVD, Langley                                                                     & 0.002495                                                                                               & $-$1.7\%                        & $-1.2\%\phantom{*}$                          & $-0.8\%\phantom{*}$                          & $-0.2\%\phantom{*}$                        \\
Age, CVD, Langley                                                                & 0.008395                                                                                               & $-$1.9\%                        & $-2.1\%\phantom{*}$                          & $-1.6\%\phantom{*}$                          & $-0.7\%\phantom{*}$                       
\end{tabular}
\caption{\label{tab:comparisonToEqualAlloc} $L_2$ loss comparisons for regret-minimizing allocations relative to equal allocation. For starred entries, the regret-minimizing allocation defaults to equal allocation.}
\end{table}

In Table \ref{tab:comparisonToNaiveAlloc}, we summarize the loss relative to na\"ive allocation. In this case, our method can underperform a na\"ive allocation derived from the observational study pilot variance estimates. This can be seen most clearly in the first row of the table, in which we stratify only on the age variable. Such underperformance is a consequence of the fact that our algorithm is defensive toward underperformance when bias and variance are present in the pilot estimates. However, there are two clear trends in the results. First, when we stratify on a variable that turns out not to be clinically relevant, like Langley scatter, the na\"ive allocation is essentially just recommending an allocation based on noise from the data; as a result, our regret-minimizing allocations uniformly outperform na\"ive allocations. Second, the regret-minimizing allocations tend to outperform the na\"ive allocations as the number of strata grow. We significantly outperform na\"ive allocation in the final row, which corresponds to stratification on all three variables and a total of 30 strata. 
\begin{table}[h]
\begin{tabular}{lrrrrr}
\multirow{2}{*}{\begin{tabular}[c]{@{}l@{}}Subgroup\\ Variable(s)\end{tabular}} & \multirow{2}{*}{\begin{tabular}[c]{@{}l@{}}Na\"ive Alloc\\ $L_2$ Loss\end{tabular}} & \multicolumn{4}{c}{Loss Relative to Na\"ive Allocation} \\ \cline{3-6}                                                                                  & \multicolumn{1}{r}{}                                                                                   & \multicolumn{1}{r}{\hspace{5mm} $\Gamma = 1$} & \multicolumn{1}{r}{\hspace{5mm} $\Gamma = 1.1$}& \multicolumn{1}{r}{\hspace{5mm} $\Gamma = 1.5$} & \multicolumn{1}{r}{\hspace{5mm} $\Gamma = 2$}  \\ \hline
Age                                                                             & 0.000501                                                                       & 1.2\%        & 1.2\%          & 1.1\%          & 3.2\%       \\
CVD                                                                             & 0.000488                                                                       & $-$0.3\%       & 0.0\%          & 0.6\%          & 2.1\%       \\
Langley                                                                         & 0.000852                                                                       & $-$1.1\%       & $-$1.3\%         & $-$1.3\%         & $-$1.3\%      \\
Age, CVD                                                                        & 0.001484                                                                       & $-$1.8\%       & $-$1.9\%         & $-$0.1\%         & 1.5\%       \\
Age, Langley                                                                    & 0.003393                                                                       & $-$0.9\%       & $-$0.9\%         & 0.0\%          & 0.6\%       \\
CVD, Langley                                                                    & 0.002481                                                                       & $-$1.1\%       & $-$0.7\%         & $-$0.3\%         & 0.3\%       \\
Age, CVD, Langley                                                               & 0.008574                                                                       & $-$3.9\%       & $-$4.1\%         & $-$3.6\%         & $-$2.8\%     
\end{tabular}
\caption{\label{tab:comparisonToNaiveAlloc} $L_2$ loss comparisons for regret-minimizing allocations relative to na\"ive allocation.}
\end{table}

While these simulation results show modest performance gains, they are encouraging. A wise analyst would be extremely cautious about designing an RCT exclusively using observational study pilot estimates of stratum variances. Because such pilot estimates can have both bias and variance, relying too heavily upon them might waste resources. Our framework allows data from the observational study to be incorporated into the RCT design while guarding against the possibility of underperforming a default allocation. 

\section{Future Work: General Case}\label{sec:conc}

We briefly discuss challenges in the more general case of $Y_i \in \mathbb{R}$. In keeping with the theme of IPW estimation, we consider estimators of the form 
\begin{equation}\label{eq:ipwEstimatorsGeneralCase}
\begin{aligned}
\hat \sigma_k^2(1) &=  \sum_{i \in \mathcal{O}_k} Y_i^2 \left(\frac{W_i}{p_i}  \right) \bigg/  \sum_{i \in \mathcal{O}_k} \left(\frac{W_i}{p_i}  \right) -  
\left(\sum_{i \in \mathcal{O}_k} Y_i \left(\frac{W_i}{p_i}  \right) \bigg/  \sum_{i \in \mathcal{O}_k} \left(\frac{W_i}{p_i}  \right) \right)^2\\
\hat \sigma_k^2(0) &=  \sum_{i \in \mathcal{O}_k} Y_i^2 \left(\frac{1-W_i}{1-p_i}  \right) \bigg/  \sum_{i \in \mathcal{O}_k} \left(\frac{1-W_i}{1-p_i}  \right) -  
\left(\sum_{i \in \mathcal{O}_k} Y_i \left(\frac{1-W_i}{1-p_i}  \right) \bigg/  \sum_{i \in \mathcal{O}_k} \left(\frac{1-W_i}{1-p_i}  \right) \right)^2\,,
\end{aligned}
\end{equation}
where $p_i$ are the true treatment probabilities. Such estimators are asymptotically unbiased. 

Under the sensitivity model of \cite{zhao2019sensitivity}, we suppose we estimate $p_i$ with fitted propensity scores, $\hat \pi_i$, defined as
\[ \hat \pi_i = \frac{1}{1 + e^{-\hat g(X_i)}}\,. \] 
In the typical setting in which we use logistic regression to estimate the propensity scores, $\hat g(X_i) = \hat \beta^T X_i$. 

We account for the possibility of $\Gamma$-level unmeasured confounding by allowing the true probability $p_i$ to satisfy
\[ p_i \in \left\{ \frac{1}{1 + z_ie^{-\hat g(X_i)}} \hspace{2mm} \bigl\vert \hspace{2mm}\frac{1}{\Gamma} \leq z_i \leq \Gamma\right\}\,, \]
Any affine transformation of our optimization variable will not change the curvature of the problem, so we redefine the problem in terms of the $v_i = p_i^{-1}$, an affine function of the $z_i$. We define two vectors $\bsv_t = \left(v_i\right)_{i: W_i = 1}$ and $\bsv_c = \left(v_i\right)_{i: W_i = 0}$, and analogously define vectors $\boldsymbol{Y_t} = \left(Y_i\right)_{W_i = 1}$ and $\boldsymbol{Y_c} = \left(Y_i \right)_{W_i = 0}$. Now, we can express the equations in \ref{eq:ipwEstimatorsGeneralCase} as quadratic fractional program, e.g. 
\begin{align*}
\hat \sigma_k^2(1) = \frac{\bsv_t^T \boldsymbol{\Theta}_t \bsv_t}{\bsv_t^T \mathbbm{1}\mathbbm{1}^T\bsv_t}, \hspace{3mm} \hat \sigma_k^2(0) = \frac{\bsv_c^T \boldsymbol{\Theta}_c \bsv_c}{\bsv_c^T\mathbbm{1}\mathbbm{1}^T \bsv_c}
\end{align*}
where 
\[ \boldsymbol{\Theta}_t = \boldsymbol{Y_t}^2\mathbbm{1}^T - \boldsymbol{Y_t}\boldsymbol{Y_t}^T\hspace{3mm} \text{ and } \hspace{3mm}\boldsymbol{Y_c}^2\mathbbm{1}^T - \boldsymbol{Y_c}\boldsymbol{Y_c}^T \,.\] 

We have few guarantees on the curvature of the problem: the numerators will be neither convex nor concave in the $\bsv_e$ terms, $e \in \{0, 1\}$, as long as the vectors $\mathbbm{1}, \boldsymbol{Y_t},$ and $\boldsymbol{Y_t}^2$ are linearly independent. The denominators will be convex in the $\bsv_e$ terms. This poses a major challenge. Quadratic fractional programming problems can be solved efficiently in some special cases, but are, in general, NP-hard \citep{phillips1998quadratic}. 

One promising avenue for future work is to apply Dinkelbach's method to transform the quadratic fractional problem to a series of quadratic programming problems \citep{dinkelbach1967nonlinear}. This will not immediately yield a solution because of the indefinite numerator, but it will allow us to make use of considerable recent work on new solution methods in quadratic programming \citep[see e.g.][]{park2017general}. 

% If the $\hat \pi_i$ values converge to the true treatment probabilities $p_i$, then these estimators are asymptotically unbiased.  

%To account for $\Gamma$-level bias in the observational study, we could solve the following optimization problem: 
%\begin{align*}
%\text{maximize} & \hspace{5mm} \hat{\tilde \sigma}_k^2 \left(\boldsymbol{v} \right) \\
%\text{subject to} & \hspace{5mm} v_i = \frac{1}{1 + z_ie^{-\hat g(X_i)}}, \hspace{5mm} i = 1, \dots, n_{ok}\\
%& \hspace{5mm} \frac{1}{\Gamma} \leq z_i \leq \Gamma, \hspace{5mm} i = 1, \dots, n_{ok}
%\end{align*}

%As written, this problem is a nonconvex quadratic program (the objective is quadratic but cannot be convex; the constraints are linear box constraints). Several heuristics exist to solve such problems, and they can sometimes be solved exactly via branch-and-bound algorithms if sufficiently small \citep{park2017general}. If we are simply interested in the maximum, then it is worth noting that the objective is very nearly concave in the $z_i$. 

\bibliographystyle{apalike} % This yields author(year) citations. 
\bibliography{rct+odb}      % bib for this and related papers 

\appendix
\section{Appendix}

\subsection{Further Details about the Women's Health Initiative}\label{sec:furtherDetailsWHI}

In this section we evaluate our estimators on data from the Women's Health Initiative 
to estimate the effect of hormone therapy on coronary heart disease. 
The Women's Health Initiative is a study of postmenopausal women in the United States, consisting of randomized controlled trial and observational study components with 161,808 total women enrolled \citep{prentice2005combined}. Eligibility and recruitment data for the WHI can be found in the early results papers \citep{hays2003women, writing2002risks}. Participants were women between 50 and 79 years old at baseline, who had a predicted survival of at least three years and were unlikely to leave their current geographic area for three years. 

Women with a uterus who met various safety, adherence, and retention criteria were eligible for a combined hormone therapy trial. A total of 16,608 women were included in the trial, with 8,506 women randomized to take 625 milligrams of estrogen and 2.5 milligrams of progestin, and the remainder receiving a placebo. A corresponding 53,054 women in the observational component of the Women's Health Initiative had an intact uterus and were not using unopposed estrogen at baseline, thus rendering them clinically comparable \citep{prentice2005combined}. About a third of these women were using estrogen plus progestin, while the remaining women in the observational study were not using hormone therapy \citep{prentice2005combined}. 

Participants received semiannual contacts and annual in-clinic visits for the collection of information about outcomes. %The outcome that we investigate is coronary heart disease (CHD). 
Disease events, including CHD, were first self-reported and later adjudicated by physicians. We focus on outcomes during the initial phase of the study, which extended for an average of 8.16 years of follow-up in the randomized controlled trial and 7.96 years in the observational study. 

%\color{red} \color{red}\textbf{(THIS IS NOT WHAT I'M SEEING WITH THE ENDWHIDY VARIABLE) \color{blue} ABO:  report the numbers you get and the version of the data that you have used. \end{comment}

The overall rate of coronary heart disease in the trial was 3.7\% in the treated group (314 cases among 8,472 women reporting) versus 3.3\% (269 cases among 8,065 women reporting) for women not randomized to estrogen and progestin. In the observational study, the corresponding rates were 1.6\% among treated women (706 out of 17,457 women reporting) and 3.1\% among control women (1,108 out of 35,408 women reporting). Our methodology compares means and not survival curves. In the initial follow-up period, death rates were relatively low in both the observational study (6.4\%) and the randomized trial (5.7\%). Hence, we do not correct for the possibility of these deaths censoring coronary heart disease events. 

%We focus on incidents of coronary heart disease, one of the three main outcomes tracked in the study, over the entirety of the initial study period. 
\subsection{Propensity Score Construction, Covariate Balance, and Gold Standard Effects }\label{sec:whiPropScores}

The Women's Health Initiative researchers collected a rich set of covariates about the participants in the study. For the purposes of computational speed, we narrow to a set of 684 variables, spanning demographics, medical history, diet, physical measurements, and psychosocial data collected at baseline. 

The most meaningful measure of covariate imbalance can be found by looking at clinically relevant factors. 
\cite{prentice2005combined} identified factors that are correlated with CHD. They found that hormone therapy users in the observational study were more likely to be Caucasian or Asian/Pacific Islander, less likely to be overweight, and more likely to have a college degree. These imbalances strongly suggest that applying a na\"ive differencing estimate to the observational data will yield an unfairly rosy view of the effect of hormone therapy on CHD. 

To generate our estimators for this dataset, we need a propensity model $e(\bsx)$ to map the observed covariates to an estimated probability of receiving the treatment in the observational study. We used a logistic regression to generate an expressive model while limiting overfit. A forward stepping algorithm was first applied to the observational dataset to put an ordering on the variables. All 684 baseline covariates were provided as candidates to a logistic regression predicting the treatment indicator, and variables were automatically added, one at a time, based on which addition most reduced Akaike's Information Criterion \citep{akaike1974new}. 

Using this ordering, models containing from one to 120 variables were generated. Model fit was assessed via the area under the Receiver Operator Characteristic curve. At each model size, the area under the curve was computed first for the nominal model and then computed again using a ten-fold cross-validation. This procedure generated the curves seen in Figure \ref{fig:cvROCAUC}. Notably, we observe that the predictive power rises rapidly with the addition of the first twenty variables to the logistic regression model, but slows dramatically thereafter. There is also very little evidence of overfit, as the nominal area under the durve only very slightly outpaces the cross-validated area under the curve, even in models with 100 or more variables. This is likely a consequence of the large number of observations in the observational dataset. 

\begin{figure}[h!]
\caption{\label{fig:cvROCAUC} Nominal and cross-validated receiver operator characteristic area under curve for propensity models with different numbers of variables}
\centering
\includegraphics[width = \textwidth]{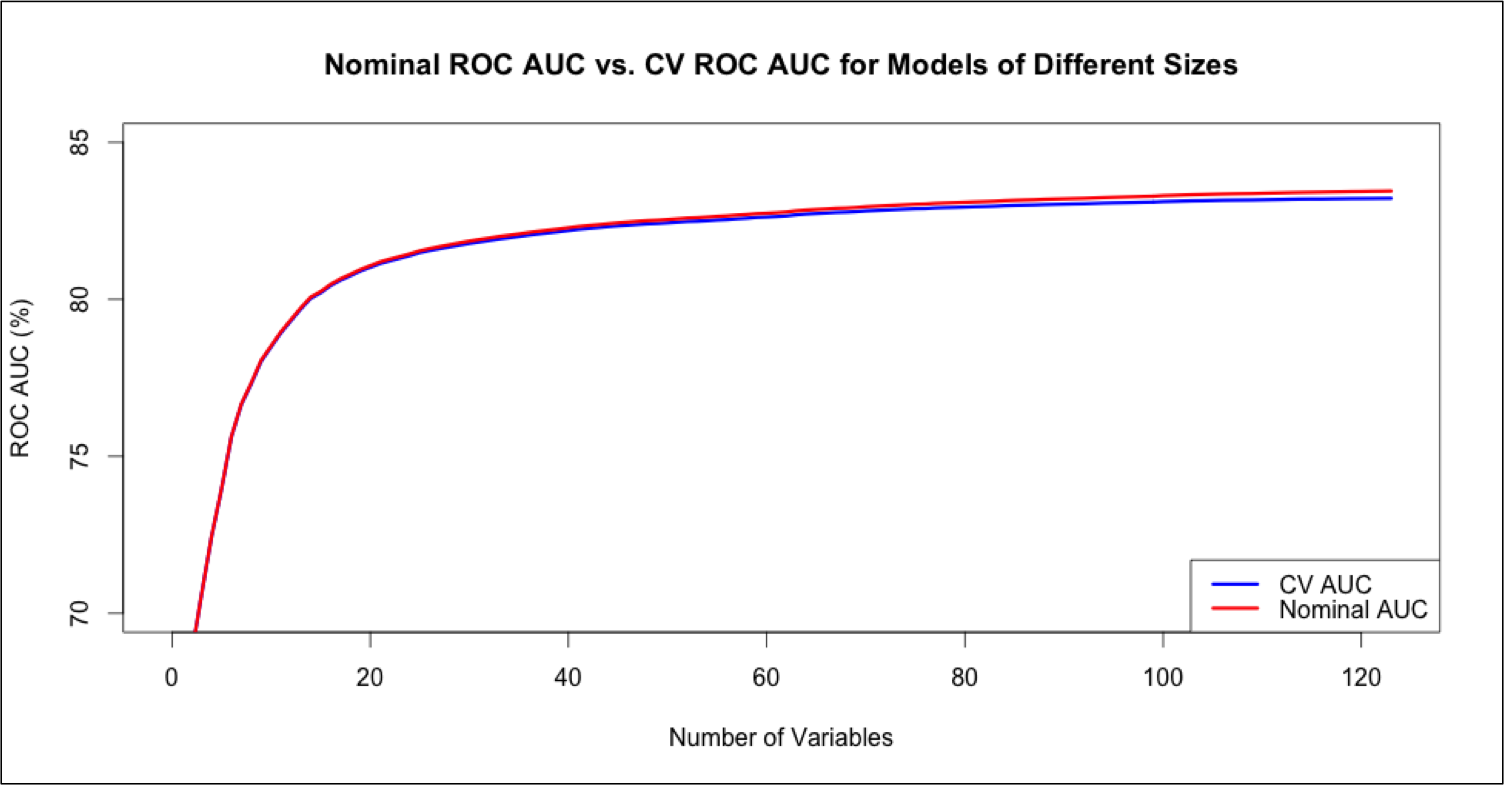}
\end{figure}

We next applied a heuristic threshold, selecting the largest model such that the most recent variable addition increased the cross-validated area under the curve by at least one basis point (0.01\%). This yields a model containing 53 variables, with area under the receiver operator characteristic curve of 82.49\%, or about 1\% lower than a model containing all 684 covariates. 
%We posit that such a model trades off between accurate prediction of treatment assignment and generalizability. 
As our goal is to obtain an association between $e(\bsx_i)$ and $W_i$, and additional variables beyond
the 53rd do not materially improve this association, omission of the additional variables seems warranted. 
% ABO:  Inte

%We selected the largest model such that the most recent variable addition increased the cross-validated area under the curve by at least one basis point (0.01\%). The final model contained 53 variable. See Appendix Section \ref{ref:propModeling} for full details on how the propensity model was constructed. 

Matching on the propensity score should reduce imbalances on clinically relevant covariates. To evaluate this effect, we use standardized differences (as advocated by \cite{rosenbaum2009design}).
Let $\bar x_{tj}$ and $\bar x_{cj}$ be the treated and control group averages for continuous covariate $j$ in the ODB before matching and let $\hat\sigma^2_{tj}$ and $\hat\sigma^2_{cj}$
be the sample variances within those two groups.
Let $\bar x_{tjk}$ and $\bar x_{cjk}$ be those averages taken over subjects $i\in\odb_k$
and define post-stratification averages as $\tilde x_{tj} = \sum_kn_{ok}\bar x_{tjk}/n_o$
and $\tilde x_{cj} = \sum_kn_{ok}\bar x_{cjk}/n_o$.
These are weighted averages of $x_{ij}$ with greater weight put on observations 
from treatment conditions that are underrepresented in their own strata.
Rosenbaum's standardized differences 
for the original and reweighted data are
$$
\overline\sd_j = \frac{\bar x_{tj}-\bar x_{cj}}{\sqrt{\frac12(\hat\sigma^2_{tj}+\hat\sigma^2_{cj})}}
\quad\text{and}\quad
\wt\sd_j = \frac{\tilde x_{tj}-\tilde x_{cj}}{\sqrt{\frac12(\hat\sigma^2_{tj}+\hat\sigma^2_{cj})}},
$$
respectively. These quantities measure the practical significance of the
imbalance between groups unlike $t$-statistics which have a
standard error in the denominator. Note that Rosenbaum uses the same denominator
in both weighted and unweighted standardized differences.

We considered ten equal-width propensity score strata to evaluate the standardized differences between treated and control on risk factors listed in \cite{prentice2005combined}, before and after adjusting for the propensity score. With the exception of the physical functioning score, all of these covariates were included in the propensity model. Imbalance measures for the continuous covariates can be found in Table \ref{tab:balanceCont}. As we can see, the stratification procedure reduces all standardized differences to less than 0.05 in absolute value, representing very good matches between the populations. %As sufficient balance was achieved with ten strata -- and finer stratification would increase variance -- we stuck with ten propensity score strata for this analysis, rather than the twenty used in simulations in the Appendix. 

For categorical variables, the stratification procedure similarly reweights individual women, such that the effective proportion of women in each category changes after stratifying on the propensity score. Standardized differences can also be computed for categorical variables, using the procedure described in Graziano et al.\cite{graziano1993research} We achieve similar balance on two significant categorical variables -- ethnicity and smoking status -- in Tables \ref{tab:balanceRace} and \ref{tab:balanceSmoking}. 

\begin{table}[h!]
\caption{\label{tab:balanceCont} Standardized differences (SD) between treated and control populations in the observational dataset, before and after stratification on the propensity score, for clinical risk factors for coronary heart disease.\newline}
\centering
\begin{tabular}{lcccccc}
\toprule
\textbf{}                                                                      & \multicolumn{3}{c}{\textbf{Before Stratifying}}                                                            & \multicolumn{3}{c}{\textbf{After Stratifying}}                                                             \\ 
%\textbf{Variable}
                                                              & \textbf{Test} & \textbf{Ctrl} & \textbf{\begin{tabular}[c]{@{}l@{}}SD\end{tabular}} & \textbf{Test} & \textbf{Ctrl} & \textbf{\begin{tabular}[c]{@{}l@{}}SD\end{tabular}} \\ 
\midrule
%\hline
\textbf{Age}                                                                   & 60.78         & 64.72            & $-0.56$                                                                       & 63.06         & 63.33            & $-0.04$                                                                       \\ 
%\hline
\textbf{BMI}                                                                   & 25.55         & 27.11            & $-0.25$                                                                       & 26.71         & 26.62            & $\phm0.00$                                                                        \\ 
%\hline
\textbf{\begin{tabular}[c]{@{}l@{}}Physical functioning \end{tabular}} & 85.23         & 79.58            & $\phm0.26$                                                                        & 81.15         & 81.23            & $\phm0.03$                                                                       \\ 
%\hline
\textbf{Age at menopause}                                                      & 50.49         & 50.19            & $\phm0.06$                                                                        & 50.35         & 50.33            & $\phm0.02$                                                                        \\ 
%\hline
\bottomrule
\end{tabular}
\end{table}

% Please add the following required packages to your document preamble:
% \usepackage{multirow}
\begin{table}[h!]
\caption{\label{tab:balanceRace} Standardized differences (SD) between treated and control populations in the observational database, before and after stratification on the propensity score, for ethnicity category.}
\begin{tabular}{lllllllll}
\toprule
                                                                                           &         & \textbf{White} & \textbf{Black} & \textbf{Latino} & \textbf{AAPI} & \textbf{\begin{tabular}[c]{@{}l@{}}Native \\ American\end{tabular}} & \textbf{\begin{tabular}[c]{@{}l@{}}Missing/\\ Other\end{tabular}} & \textbf{\begin{tabular}[c]{@{}l@{}}SD \end{tabular}} \\ \hline
\multirow{2}{*}{\textbf{\begin{tabular}[c]{@{}l@{}}Before \\ Stratifying \end{tabular}}} & Treated    & 89.0\%         & 2.7\%          & 2.9\%           & 4.0\%         & 0.2\%                                                               & 1.1\%                                                             & \multirow{2}{*}{0.26}                                                \\ %\cline{2-8}
                                                                                           & Control & 83.1\%         & 8.1\%          & 3.9\%           & 2.8\%         & 0.4\%                                                               & 1.5\%                                                             &                                                                      \\ \midrule
\multirow{2}{*}{\textbf{\begin{tabular}[c]{@{}l@{}}After \\ Stratifying\end{tabular}}}  & Treated    & 83.4\%         & 6.9\%          & 4.3\%           & 3.6\%         & 0.5\%                                                               & 1.4\%                                                             & \multirow{2}{*}{0.05}                                                \\% \cline{2-8}
                                                                                           & Control & 84.8\%         & 6.4\%          & 3.6\%           & 3.4\%         & 0.4\%                                                               & 1.4\%                                                             &                                                                      \\ %\hline 
                                                                                           \bottomrule
\end{tabular}
\end{table}

% Please add the following required packages to your document preamble:
% \usepackage{multirow}
\begin{table}[h!]
\caption{\label{tab:balanceSmoking} Standardized differences (SD) between treated and control populations in the observational database, before and after stratification on the propensity score, for smoking category.}
\begin{tabular}{llllll}
\toprule
                                                                                           &         & \textbf{\begin{tabular}[c]{@{}l@{}}Never \\ Smoked\end{tabular}} & \textbf{\begin{tabular}[c]{@{}l@{}}Past \\ Smoker\end{tabular}} & \textbf{\begin{tabular}[c]{@{}l@{}}Current \\ Smoker\end{tabular}} & \textbf{\begin{tabular}[c]{@{}l@{}}SD \end{tabular}} \\ \hline
\multirow{2}{*}{\textbf{\begin{tabular}[c]{@{}l@{}}Before \\ Stratifying\end{tabular}}} & Treated    & 48.7\%                & 46.2\%               & 5.1\%                   & \multirow{2}{*}{0.11}                                                       \\% \cline{2-5}
                                                                                           & Control & 52.3\%                & 41.1\%               & 6.6\%                   &                                                                             \\ \hline
\multirow{2}{*}{\textbf{\begin{tabular}[c]{@{}l@{}}After\\ Stratifying\end{tabular}}}   & Treated    & 50.9\%                & 42.5\%               & 6.6\%                   & \multirow{2}{*}{0.01}                                                       \\ %\cline{2-5}
                                                                                           & Control & 51.0\%                & 42.7\%               & 6.3\%                   &                                                                             \\ \hline
\end{tabular}
\end{table}

Lastly, consider estimation of the ``gold standard" causal effect. We randomly partition the randomized trial data into two subsets of equal size, such that each contains the same number of treated and control women. We select one of these subsets and refer to it as our ``gold" dataset, to be used for estimating the true causal effect. The remaining subset is referred to as the ``silver" dataset, and is used for evaluating our estimators. 

Because of the randomization, we find that treated and control are already well balanced on the coronary heart disease risk factors in the gold dataset, as summarized in Tables \ref{tab:balanceContRCT}, \ref{tab:balanceRaceRCT}, and \ref{tab:balanceSmokingRCT}.

\begin{table}[h!]
\caption{\label{tab:balanceContRCT} Standardized differences (SD) between treated and control populations in RCT gold dataset, for clinical risk factors for coronary heart disease.}
\begin{tabular}{lllr}
\toprule
\textbf{Variable}                                                             & \textbf{Treated} & \textbf{Control} & \textbf{SD} \\ \hline
\textbf{Age}                                                                  & 63.24         & 63.41            & $-$0.02                                                                \\ %\hline
\textbf{BMI}                                                                  & 28.33         & 28.38            & $-$0.01                                                                 \\ %\hline
\textbf{Physical functioning} & 80.97         & 81.11            & $-$0.01                                                                \\ %\hline
\textbf{Age at menopause}                                                     & 44.97         & 46.33            & $-$0.09                                                                \\ \bottomrule
\end{tabular}
\end{table}

% Please add the following required packages to your document preamble:
% \usepackage{multirow}
\begin{table}[h!]
\caption{\label{tab:balanceRaceRCT} Standardized differences (SD) between treated and control populations in RCT gold dataset, for ethnicity category.}
\begin{tabular}{llllllll}
\toprule
                 & \textbf{White} & \textbf{Black} & \textbf{Latino} & \textbf{AAPI} & \textbf{\begin{tabular}[c]{@{}l@{}}Native\\ American\end{tabular}} & \textbf{\begin{tabular}[c]{@{}l@{}}Missing/\\ Other\end{tabular}} & \textbf{\begin{tabular}[c]{@{}l@{}}SD \end{tabular}} \\ \hline
\textbf{Treated}    & 84.1\%         & 6.5\%          & 5.5\%           & 2.1\%         & 0.26\%                                                             & 1.6\%                                                             & \multirow{2}{*}{0.05}                                                \\ %\cline{1-7}
\textbf{Control} & 84.6\%         & 6.8\%          & 5.1\%           & 1.9\%         & 0.40\%                                                             & 1.2\%                                                             &                                                                      \\ \bottomrule
\end{tabular}
\end{table}

% Please add the following required packages to your document preamble:
% \usepackage{multirow}
\begin{table}[h!]
\caption{\label{tab:balanceSmokingRCT} Standardized differences (SD) between treated and control populations in RCT gold dataset, for smoking category.}
\begin{tabular}{lllll}
\toprule
& \textbf{\begin{tabular}[c]{@{}l@{}}Never  Smoked\end{tabular}} & \textbf{\begin{tabular}[c]{@{}l@{}}Past Smoker\end{tabular}} & \textbf{\begin{tabular}[c]{@{}l@{}}Current Smoker\end{tabular}} & \textbf{\begin{tabular}[c]{@{}l@{}} SD \end{tabular}}\\ \hline
\textbf{Treated}    & 50.1\%         & 38.7\%          & 11.2\%   & \multirow{2}{*}{0.03}                                                \\ %\cline{1-4}
\textbf{Control} & 50.6\%         & 39.1\%          & 10.2\%                                                      &                                                                      \\ \bottomrule
\end{tabular}
\end{table}

\subsection{Stratification Variable Distributions} \label{sec:suppWhi}

In Tables \ref{table:ageVarDist}, \ref{table:CVDVarDist}, and \ref{table:LangleyVarDist},  we provide the distributions for the variables with which we stratify in the main text.

\begin{table}[h!]
\begin{tabular}{cccc}%{|l|l|l|l|}
\hline
\textbf{Age} & \textbf{\begin{tabular}{c} Observational \\ Study \end{tabular}}& \textbf{\begin{tabular}[c]{@{}l@{}}RCT\end{tabular}} &\textbf{\begin{tabular}{c} RCT ``Silver" \\ Dataset \end{tabular}}\\ \hline
\textbf{50-59}     & 17,447 (33.0\%)                                                                        & 5,491 (33.2\%)                                                      & 2,806 (33.9\%)                                                                         \\ 
\textbf{60-69}     & 23,030 (43.6\%)                                                                        & 7,473 (45.2\%)                                                      & 3,689 (44.6\%)                                                                         \\ 
\textbf{70-79}     & 12,388 (23.4\%)                                                                        & 3,573 (21.2\%)                                                      & 1,774 (21.5\%)                                                                         \\ \hline
\end{tabular}
\caption{\label{table:ageVarDist} Distribution of age variable values in the observational study, RCT, and RCT ``silver" datasets.}
\end{table}

\begin{table}[h]
\begin{tabular}{cccc}
\hline
\textbf{\begin{tabular}{c} History of \\ Cardiovascular \\ Disease \end{tabular}} & \textbf{\begin{tabular}{c} Observational \\ Study \end{tabular}}& \textbf{\begin{tabular}[c]{@{}l@{}}RCT\end{tabular}} &\textbf{\begin{tabular}{c} RCT ``Silver" \\ Dataset \end{tabular}}\\ \hline
\textbf{Yes}     & 8,709 (16.5\%)                                                                        & 1,828 (11.1\%)                                                      & 900 (10.9\%)                                                                         \\
\textbf{No}     & 44,156 (83.5\%)                                                                        & 14,709 (88.9\%)                                                      & 7,369 (89.1\%)                                                                         \\ \hline
\end{tabular}
\caption{\label{table:CVDVarDist} Distribution of history of cardiovascular disease in the observational study, RCT, and RCT ``silver" datasets.}
\end{table}

\begin{table}[h]
\begin{tabular}{cccc}
\hline
\textbf{\textbf{\begin{tabular}{c} Langley Scatter \\ (g-cal/$\text{cm}^2$) \end{tabular}}} & \textbf{\begin{tabular}{c} Observational \\ Study \end{tabular}}& \textbf{\begin{tabular}[c]{@{}l@{}}RCT\end{tabular}} &\textbf{\begin{tabular}{c} RCT ``Silver" \\ Dataset \end{tabular}}\\ \hline
\textbf{300-325}     & 15,599 (29.5\%)                                                                        & 4,854 (29.4\%)                                                      & 2,411 (29.2\%)                                                                         \\ 
\textbf{350}     & 12,521 (23.7\%)                                                                        & 3,917 (23.7\%)                                                      & 1,935 (23.4\%)                                                                         \\ 
\textbf{375-380}     & 5,841 (11.0\%)                                                                        & 1,858 (11.2\%)                                                      & 934 (11.3\%)                                                                         \\ 
\textbf{400-430}     & 8,216 (15.5\%)                                                                        & 2,585 (15.6\%)                                                      & 1,310 (15.8\%)                                                                         \\ 
\textbf{475-500}     & 10,688 (20.2\%)                                                                        & 3,323 (20.1\%)                                                      & 1,679 (20.3\%)                                                                         \\ \hline
\end{tabular}
\caption{\label{table:LangleyVarDist} Distribution of Langley scatter categories in the observational study, RCT, and RCT ``silver" datasets.}
\end{table}

\section{Proof of Validity of Confidence Regions}\label{app:validConfRegions}

We hew closely to the proofs provided in  \cite{zhao2019sensitivity}. Their primary proofs consider the missing data problem, which is equivalent to estimating the mean of either of the potential outcomes. We begin by providing details of their proof and then show how it can be extended to our case. 

\subsection{Review of Proof in \cite{zhao2019sensitivity}}

The authors define $e(\bsx, y)$ as the probability of treatment given covariates $\bsX = \bsx \in \mathcal{X}$ and outcome $Y = y \in \mathbb{R}$ and compare it against the marginal treatment probability $e(\bsx)$. They use $A$ rather than $W$ to denote a treatment indicator, so in keeping with their notation: 
\[ e(\bsx, y) = P_0(A = 1 \mid \bsX = \bsx, Y = y) \hspace{3mm} \text{ and } e(\bsx, y) = P_0(A = 1 \mid \bsX = \bsx) \]
Then, for any choice of $\Lambda > 1$, they define a collection of sensitivity models
\[ \mathcal{E}(\Lambda) = \left\{ 0 \leq e(\bsx, y) \leq 1 \mid \frac{1}{\Lambda} \leq \OR(e(\bsx, y), e(\bsx)), \text{ for all $\bsx, y$} \right\} \] 
where $\OR(p_1, p_2) = [p_1/(1-p_1)]/[p_2/(1-p_2)]$ is the odds ratio. This model was originally introduced by \cite{tan2006distributional}. Per proposition 7.1, it is related to the widely used Rosenbaum sensitivity model. In keeping with that model, we use $\Gamma$ rather than $\Lambda$ to denote our sensitivity parameter in the text, but retain the notation $\Lambda$ throughout this proof. 

Via remark 3.2, Zhao and co-authors reparameterize the problem such that each model in $\mathcal{E}(\Lambda)$ corresponds to a choice of $h(\bsx, y)$, the logit-scale difference of the observed probability $e(\bsx)$ and the complete data selection probability $e(\bsx, y)$. So we can alternatively write: 
\[ \mathcal{E}(\Lambda) = \left\{ e^{(h)}(\bsx, y) \mid h \in \mathcal{H}(\lambda) \right\} \] 
where $\lambda = \log(\Lambda)$ and $\mathcal{H}(\lambda) = \{h: \mathcal{X} \times \mathbb{R} \mid ||h||_{\infty} \leq \lambda\}$. In words: every choice of $h \in \mathcal{H}(\lambda)$ defines, at each possible value of $\bsX$ and $Y$, a discrepancy between $e(\bsx)$ and $e(\bsx, y)$. The choice of $\mathcal{H}(\lambda)$ bounds the maximum of those discrepancies. So, as $\Lambda$ grows, we are allowing for greater and greater discrepancies in these probabilities. 

For each choice of $h$, they define a ``shifted estimand,"
\[ \mu^{(h)} = \left( \E \left( \frac{AY}{e^{(h)}(\bsx, y)} \right)\right)^{-1} \E \left( \frac{AY}{e^{(h)}(\bsx, y)} \right) \] 
where $A$ is the treatment indicator and the expectation is over the joint distribution of $\bsX, Y, A$. The corresponding ``shifted estimator" is given by
\[ \hat \mu^{(h)} = \left( \frac{1}{n} \sum_{i = 1}^n \frac{A_i}{\hat e^{(h)}(\bsX_i, Y_i) }\right)^{-1} \frac{1}{n} \sum_{i = 1}^n \frac{A_i Y_i}{\hat e^{(h)}(\bsX_i, Y_i)} \,. \] 
The sum is over a sample of points $(\bsX_i, Y_i, A_i)$ drawn i.i.d. from their joint distribution. The  quantity in the denominators, $\hat e^{(h)}(\bsX_i, Y_i)$, is obtained by estimating $P(A = 1 \mid \bsX = \bsx)$ and then shifting the estimate by $h(\bsx_i, y_i)$ for all units $i$ such that $\bsX_i = \bsx_i$ and $Y_i = y_i$. 

Now, the proof of the validity of their approach proceeds in several stages. 
\begin{enumerate}
    \item First, they consider the case where data-dependent intervals $[L^{(h)}, U^{(h)}]$ are asymptotically guaranteed to contain $\mu^{(h)}$ with $1-\alpha$ probability. They argue that taking $L = \inf_{h \in \mathcal{H}(\lambda)} L^{(h)}$ and $U = \sup_{h \in \mathcal{H(\lambda)}} U^{(h)}$ yields an interval $[L, H]$ with asymptotic $1-\alpha$ coverage for every value of $\mu^{(h)}$ for which $h \in \mathcal{H}(\lambda)$. (Proposition 4.1).
    \item For each choice of $h \in \mathcal{H}(\lambda)$, they establish that the bootstrap is valid (Theorem 4.2).
    \begin{itemize}
        \item First, they use the general theory of Z-estimators to show that $\hat \mu^{(h)}$ and its bootstrap analogue, $\hat {\hat{\mu}}^{(h)}$, are asymptotically normal with the same mean and variance. (Theorem C.1 and Corollary C.2)
        \item Then, they conclude that defining $L_B^{(h)}$ as the $\alpha/2$ bootstrap quantile, they have
        \[ P\left( \mu^{(h)} < L_B^{(h)}\right) \to \frac{\alpha}{2} \] 
        where the expectation is taken under the joint distribution of $\bsX, Y$ and $A$. Analogous results holds for $U_B^{(h)}$, the $1 - \alpha/2$ bootstrap quantile. (Section C.3)
    \end{itemize}
    \item They argue that the quantile and infimum/supremum functions can be interchanged, such that 
    \[ Q_{\alpha/2} \left( \inf_{h \in \mathcal{H}(\lambda)} \hat {\hat{\mu}}^{(h)} \right) \leq \inf_{h \in \mathcal{H}(\lambda)} L^{(h)} \] 
    and
    \[ Q_{1-\alpha/2} \left( \sup_{h \in \mathcal{H}(\lambda)} \hat {\hat{\mu}}^{(h)} \right) \geq \sup_{h \in \mathcal{H}(\lambda)} U^{(h)} \] 
    via Lemma 4.3. 
\end{enumerate}

\subsection{Extension to Design Case}

Our challenge is to extend this argument to the case where our estimand of interest is not a single $\mu$ but rather the pair $(\sigma_k^2(0), \sigma_k^2(1)) = (\mu_k(0)(1-\mu_k(0)), \mu_k(1)(1-\mu_k(1))$. Crucially, we will now have two $h$ functions $h_0$ and $h_1$, corresponding to each of the potential outcomes, but they both lie within $H(\lambda)$. The definition of the shifted estimand under $h$ given above generalizes to the case of two shifted estimands in a straightforward way. We extend Proposition 1 in the following argument. 

\begin{proposition}\label{prop:dataDepIntervals}
Suppose there exists a data-dependent region $\beta_k^{(h_0, h_1)} \in \mathbb{R}^2$ such that%$[L^{(h_0)}(0), U^{(h_0)}(0)] \times [L^{(h_1)}(1), U^{(h_1)}(1)]$ such that 
%\[ \liminf_{n \to \infty} P\left(\left(\sigma_k^{(h_0)}(0)^2, \sigma_k^{(h_1)}(1)^{2}\right) \in [L^{(h_0)}(0), U^{(h_0)}(0)] \times [L^{(h_1)}(1), U^{(h_0)}(1)]\right) \geq 1 - \alpha \] 
\[ \liminf_{n \to \infty} P\left(\left(\sigma_k^{(h_0)}(0)^2, \sigma_k^{(h_1)}(1)^{2}\right) \in \beta_k^{(h_0, h_1)} \right) \geq 1 - \alpha \] 
holds for every $(h_0, h_1) \in \mathcal{H}(\lambda)\times  \mathcal{H}(\lambda)$, where $\sigma_k^{(h_0)}(e)^2 = \mu_k^{(h_1)}(e) (1 - \mu_k^{(h)}(e))$ for $e \in \{0, 1\}$, and $n$ is the sample size. 
Under these conditions, the set 
\[ \beta_k = \bigcup_{h_0, h_1 \in \mathcal{H}(\lambda)} \beta_k^{(h_0, h_1)}\]%[L^{(h_0)}(0), U^{(h_0)}(0)] \times [L^{(h_1)}(1), U^{(h_1)}(1) ]\] 
is an asymptotic confidence set of $(\sigma_k^2(0), \sigma_k^2(1))$ with at least $1 - \alpha$ coverage if $h_0, h_1 \in \mathcal H(\lambda)$. 
\end{proposition}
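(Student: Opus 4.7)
The plan is to mimic Proposition 4.1 of \cite{zhao2019sensitivity}, adapted to the bivariate product structure. The key observation is that if the $\Gamma$-level sensitivity model holds, there exist (unknown) shift functions $h_0^{\star}, h_1^{\star} \in \mathcal{H}(\lambda)$ such that the true potential outcome variances coincide with the corresponding shifted estimands, $\sigma_k^2(e) = \sigma_k^{(h_e^{\star})}(e)^2$ for $e \in \{0,1\}$. Hence coverage of the true variances reduces to showing coverage at this unknown pair.

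First, I would fix an arbitrary pair $(h_0, h_1) \in \mathcal{H}(\lambda) \times \mathcal{H}(\lambda)$ and note that $\beta_k^{(h_0, h_1)} \subseteq \beta_k$ holds deterministically for every sample, by definition of $\beta_k$ as a union. Consequently, the event $\{(\sigma_k^{(h_0)}(0)^2, \sigma_k^{(h_1)}(1)^2) \in \beta_k^{(h_0, h_1)}\}$ is contained in the event $\{(\sigma_k^{(h_0)}(0)^2, \sigma_k^{(h_1)}(1)^2) \in \beta_k\}$. Monotonicity of probability, combined with the hypothesis of the proposition, yields
\begin{equation*}
\liminf_{n \to \infty} P\bigl((\sigma_k^{(h_0)}(0)^2, \sigma_k^{(h_1)}(1)^2) \in \beta_k\bigr) \geq 1 - \alpha
\end{equation*}
for each fixed $(h_0, h_1)$.

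Second, I would specialize this bound to the (unknown) true shift pair $(h_0^{\star}, h_1^{\star})$. At this pair the shifted variances equal the true variances, so substitution gives
\begin{equation*}
\liminf_{n \to \infty} P\bigl((\sigma_k^2(0), \sigma_k^2(1)) \in \beta_k\bigr) \geq 1 - \alpha,
\end{equation*}
which is the desired conclusion. No exchange of infimum and probability is needed, since the coverage inequality is evaluated at a single pre-specified (albeit unknown) pair.

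The only subtlety I anticipate is justifying that the treated and control shifts can be indexed independently by the Cartesian product $\mathcal{H}(\lambda) \times \mathcal{H}(\lambda)$. Zhao et al.\ work with a single shift function tied to one potential outcome; here two appear because both $\sigma_k^2(0)$ and $\sigma_k^2(1)$ are estimands of interest. Independence is legitimate because the sensitivity constraint bounds the odds ratio pointwise in $(\bsx, y)$, so a shift in the treated arm's selection probabilities imposes no constraint on the control arm, and vice versa. A secondary remark is that the hypothesis of the proposition presumes the existence of $\beta_k^{(h_0, h_1)}$ with joint coverage, which in practice one would construct by applying the univariate bootstrap intervals of \cite{zhao2019sensitivity} separately on the treated and control subsamples and combining via a union bound; but this construction lies outside the statement being proved here.
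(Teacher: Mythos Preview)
Your proposal is correct and follows essentially the same approach as the paper: the paper's proof is a single sentence observing that the true data-generating distribution corresponds to some $(h_0, h_1) \in \mathcal{H}(\lambda) \times \mathcal{H}(\lambda)$, and your argument spells out the monotonicity-of-probability step that makes this work. Your additional remarks on the product structure and the construction of $\beta_k^{(h_0, h_1)}$ are accurate but lie outside the proposition itself, as you note.
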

\begin{proof}
This follows from the fact that, by assumption, the true data-generating distribution satisfies in $h_0, h_1 \in \mathcal H(\lambda)$. 
\end{proof}

Next, we must show that the bootstrap is valid in our setting. We adopt the same model and regularity conditions of Theorem 4.2 in \cite{zhao2019sensitivity}. In their proof of Corollary 5.1, the authors show that the pairs $\left(\hat \mu_k^{(h_0)}(0), \hat \mu_k^{(h_1)}(1)\right)$ and $\left(\hat{\hat \mu}_k^{(h_0)}(0), \hat{\hat{\mu}}_k^{(h_1)}(1)\right)$ are both jointly asymptotically normal, with the same limiting distribution. We define the function
\[ f(x, y) = \left(x\cdot(1-x), y \cdot(1-y)\right) \,.\]
We can see that applying $f(\cdot)$ to the tuple of potential outcome means will yield the potential outcome variances, and the same logic holds for applying $f(\cdot)$ to any estimator of the potential outcome means. Moreover, because $f(\cdot)$ is continuously differentiable, we can use the Delta Method to observe immediately that $\left(\hat \sigma_k^{(h_0)}(0)^2, \hat \sigma_k^{(h_1)}(1)^2\right)$ and $\left(\hat{\hat \sigma}_k^{(h_0)}(0)^2, \hat{\hat{\sigma}}_k^{(h_1)}(1)^2\right)$ have the same asymptotic distribution, and thus the bootstrap is valid \citep{van2000asymptotic}. 

%We define the operator $\Conv_{\alpha}(\cdot)$ to take in a set of points in $\mathbb{R}^2$ and return the smallest convex set that contains at at least $1- \alpha$ proportion of the points (we assume this set is unique). From the validity of the percentile bootstrap, it follows immediately that, for any choice of $h_0, h_1 \in \mathcal H(\lambda)$,
%\[ \limsup_{n \to \infty} P\left( \left(\sigma_k^{(h_0)}(0)^2, \sigma_k^{(h_1)}(1)^{2}\right) \in \Ell_{\alpha} \left( \left(\hat {\hat \sigma}_{k,b}^{(h_0)}(0)^2, \hat {\hat \sigma}_{k,b}^{(h_1)}(1)^2\right)_{b \in [N]}\right)  \right) \geq 1- \alpha \] 
%where $\left(\hat {\hat \sigma}_{k,b}^{(h_0)}(0)^2, \hat {\hat \sigma}_{k,b}^{(h_1)}(1)^2\right)$ is the $b^{th}$ bootstrap sample, $N = n^n$ is the total number of possible bootstrap samples, and $[N] = \{1, \dots, N\}$. Applying Proposition \ref{prop:dataDepIntervals}, we see the set 
%\[ \mathcal{B}_k = \bigcup_{h_0, h_1 \in \mathcal{H}(\lambda)} \Ell_{\alpha} \left( \left(\hat {\hat \sigma}_{k,b}^{(h_0)}(0)^2, \hat {\hat \sigma}_{k,b}^{(h_1)}(1)^2\right)_{b \in [N]}\right)   \] 
%is an asymptotic confidence set of $(\sigma_k^2(0), \sigma_k^2(1))$ with at least $(1 - \alpha)$ coverage if $h_0, h_1 \in \mathcal H(\lambda)$. 

Lastly, we generalize Lemma 4.3 to our setting. For each possible bootstrap replicate $b \in \{1, \dots, N\}$ where $N = n^n$, define the quartet of points
\[ \hat{\hat R}_{k, b} = \left\{ \begin{array}{c}  \left(\inf_{h_0 \in \mathcal H(\lambda)} \hat{\hat \sigma}_k^{(h_0)}(0)^2, \inf_{h_1 \in \mathcal H(\lambda)} \hat{\hat \sigma}_k^{(h_1)}(1)^2 \right), \\ 
 \left(\inf_{h_0 \in \mathcal H(\lambda)} \hat{\hat \sigma}_{k,b}^{(h_0)}(0)^2, \sup_{h_1 \in \mathcal H(\lambda)} \hat{\hat \sigma}_{k,b}^{(h_1)}(1)^2 \right), \\
  \left(\sup_{h_0 \in \mathcal H(\lambda)} \hat{\hat \sigma}_{k,b}^{(h_0)}(0)^2, \inf_{h_1 \in \mathcal H(\lambda)} \hat{\hat \sigma}_{k,b}^{(h_1)}(1)^2 \right), \\
   \left(\sup_{h_0 \in \mathcal H(\lambda)} \hat{\hat \sigma}_{k,b}^{(h_0)}(0)^2, \sup_{h_1 \in \mathcal H(\lambda)} \hat{\hat \sigma}_{k,b}^{(h_1)}(1)^2 \right) \end{array} \right\} \,.\] 
In words, $\hat{\hat R}_{k, b}$ contains the vertices of a rectangle in $\mathbb{R}^2$ which defines the extrema of the potential outcome variances consistent with $h_0, h_1 \in \mathcal{H}(\lambda)$. 

Denote as $\Conv(\cdot)$ the standard convex hull operator. Define a related operator, 
%\[ \Conv^{\star}\left(\{\hat{\hat R}_{k, b}\}, \mathcal{B}\right) = \Conv \left( \bigcup_{b \in \mathcal{B}} \left(\hat{\hat R}_{k, b}\right)\right) \]
%which also takes in a set $\mathcal{B} \subset \{1, 2, \dots, N\}$ and returns the convex hull of the rectangles whose index is in $\mathcal{B}$. We choose a set $\mathcal{B}_{\alpha}$ such that $|\mathcal{B}_{\alpha}| = (1-\alpha)N$. 
\[ \Conv^{\star}\left(S, \mathcal{B}\right) = \Conv \left( \bigcup_{b \in \mathcal{B}}  S_b\right) \] 
which takes in a set $S$ of cardinality $N_S$ as well as a set $\mathcal{B} \subset \{1, \dots, N_S\}$. The function returns the convex hull of the points contained in the entries in $S$ indexed by $\mathcal{B}$.

We choose a set $\mathcal{B}_{\alpha} \subseteq \{1, 2, \dots, N = n^n\}$ such that $|\mathcal{B}_{\alpha}| = (1-\alpha)N$, and we define the set
\[ \mathcal{A}_k =\Conv^{\star}\left(\{\hat{\hat R}_{k, b}\}, \mathcal{B}_{\alpha}\right) \,. \]

\begin{lemma}
The set $\mathcal{A}_k$ is an asymptotically valid confidence set.%satisfies $\mathcal{B}_k \supseteq \beta_k$. 
\end{lemma}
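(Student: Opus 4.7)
The plan is to adapt the univariate interchange argument of Lemma 4.3 in \cite{zhao2019sensitivity} to the bivariate setting, and then invoke Proposition \ref{prop:dataDepIntervals} to transfer the conclusion from the shifted-estimand envelope to the true variance pair. The unifying trick is to let the rectangle $\hat{\hat R}_{k,b}$ absorb the entire envelope over $\mathcal{H}(\lambda)^2$ into a single geometric object per bootstrap replicate, which avoids the absence of a bivariate analog of a univariate quantile.

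First, I would fix an arbitrary pair $(h_0, h_1) \in \mathcal{H}(\lambda) \times \mathcal{H}(\lambda)$ and observe, directly from the definition of $\hat{\hat R}_{k,b}$ as the vertices of a bounding rectangle, that
\[ \left( \hat{\hat\sigma}_{k,b}^{(h_0)}(0)^2,\ \hat{\hat\sigma}_{k,b}^{(h_1)}(1)^2 \right) \in \Conv\!\left( \hat{\hat R}_{k,b} \right) \]
for every bootstrap replicate $b$. Consequently, whenever $\mathcal{A}_k$ contains the four vertices of $\hat{\hat R}_{k,b}$ for every $b \in \mathcal{B}_\alpha$, the convexity of $\mathcal{A}_k$ forces it to also contain the bootstrap shifted estimate $\big(\hat{\hat\sigma}_{k,b}^{(h_0)}(0)^2, \hat{\hat\sigma}_{k,b}^{(h_1)}(1)^2\big)$ for the same $(1-\alpha)N$ replicates, and this statement holds simultaneously across all choices of $(h_0, h_1) \in \mathcal{H}(\lambda)^2$.

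Next, I would invoke the bootstrap validity established in the preceding subsection: the Delta Method gives joint asymptotic normality with a common limit for both $\big(\hat\sigma_k^{(h_0)}(0)^2, \hat\sigma_k^{(h_1)}(1)^2\big)$ and its bootstrap analogue. Combined with the containment above, this implies that $\mathcal{A}_k$ asymptotically captures at least a $(1-\alpha)$-fraction of the joint bootstrap mass, so (provided $\mathcal{B}_\alpha$ indexes the central portion of this mass, as it does under the minimum-volume ellipsoid/shrinkage construction described in Section \ref{subsec:tractable}) the set $\mathcal{A}_k$ is an asymptotically valid $(1-\alpha)$ confidence region for $\big(\sigma_k^{(h_0)}(0)^2, \sigma_k^{(h_1)}(1)^2\big)$, for every $(h_0, h_1) \in \mathcal{H}(\lambda)^2$. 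Applying Proposition \ref{prop:dataDepIntervals} with $\beta_k^{(h_0, h_1)} \equiv \mathcal{A}_k$ then yields asymptotic $(1-\alpha)$ coverage of the true pair $(\sigma_k^2(0), \sigma_k^2(1))$ whenever the data-generating distribution lies in the sensitivity class $\mathcal{E}(\Lambda)$.

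The main obstacle is the bivariate version of the quantile-and-inf/sup interchange: Lemma 4.3 of \cite{zhao2019sensitivity} exploited the monotonicity of univariate quantiles, but no analogous scalar ordering is available in $\mathbb{R}^2$. The rectangle containment step is what resolves this, since it collapses the joint coverage claim into a statement about a single convex-hull functional evaluated on the bootstrap empirical measure; that statement can then be made rigorous by a portmanteau/continuous-mapping argument, together with a verification that the selected $\mathcal{B}_\alpha$ genuinely corresponds to the central bootstrap mass. A secondary (and more minor) technicality is that each bootstrap replicate produces four vertices rather than a single draw, but because those vertices are continuous functionals of the replicate they share the same asymptotic stability and pose no substantive difficulty.
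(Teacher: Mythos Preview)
Your proposal is correct and follows essentially the same approach as the paper: both arguments use the rectangle $\hat{\hat R}_{k,b}$ to collapse the envelope over $\mathcal{H}(\lambda)^2$ into a single convex object per replicate, invoke the Delta-method bootstrap validity, and finish via Proposition~\ref{prop:dataDepIntervals}. The only cosmetic difference is that the paper routes through the intermediate set $\Conv^{\star}\bigl(\{(\hat{\hat\sigma}_{k,b}^{(h_0)}(0)^2,\hat{\hat\sigma}_{k,b}^{(h_1)}(1)^2)\},\mathcal{B}_\alpha\bigr)$ and then shows $\mathcal{A}_k$ is a superset, whereas you take $\beta_k^{(h_0,h_1)}\equiv\mathcal{A}_k$ directly; these are equivalent.
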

\begin{proof}
For $1 \leq b \leq N$, where $N = n^n$ is the total number of possible bootstrap samples, we have that for every $h_0, h_1 \in \mathcal{H}(\lambda)$,
\[ \left(\hat {\hat \sigma}_{k,b}^{(h_0)}(0)^2, \hat {\hat \sigma}_{k,b}^{(h_1)}(1)^2\right) \in \Conv(\hat{\hat R}_{k, b}), \hspace{3mm} \text{for all $ 1 \leq b \leq N$,} \] 
Since this holds entrywise, it follows that any set containing a fixed proportion of the sets on the RHS must contain at least that proportion of points on the LHS, and hence 
\[\Conv^{\star} \left(\left\{ \left(\hat {\hat \sigma}_{k,b}^{(h_0)}(0)^2, \hat {\hat \sigma}_{k,b}^{(h_1)}(1)^2\right)\right\}, \mathcal{B}_{\alpha} \right) \subseteq \Conv^{\star} \left( \left\{\Conv(\hat{\hat R}_{k, b}) \right\}, \mathcal{B}_{\alpha} \right) \,.\] 

Since this holds for every $h_0, h_1 \in \mathcal{H}(\lambda)$, we can take the union on the LHS to observe
\[\bigcup_{h_0, h_1 \in \mathcal{H}(\lambda)} \Conv^{\star} \left(\left\{ \left(\hat {\hat \sigma}_{k,b}^{(h_0)}(0)^2, \hat {\hat \sigma}_{k,b}^{(h_1)}(1)^2\right)\right\}, \mathcal{B}_{\alpha} \right) \subseteq \Conv^{\star} \left( \left\{\Conv(\hat{\hat R}_{k, b}) \right\}, \mathcal{B}_{\alpha} \right) \,.\] 
Observe that the RHS is simply $\mathcal{A}_k$, since any ellipse containing the vertices of a rectangle will contain the convex hull of those vertices as well. 

On the LHS, we can make use of our bootstrap validity result to observe 
\[ \liminf_{n \to \infty} P\left( \left(\sigma_k^{(h_0)}(0)^2, \sigma_k^{(h_1)}(1)^{2}\right) \in  \Conv^{\star} \left(\left\{ \left(\hat {\hat \sigma}_{k,b}^{(h_0)}(0)^2, \hat {\hat \sigma}_{k,b}^{(h_1)}(1)^2\right)\right\}, \mathcal{B}_{\alpha} \right)  \right)   \geq 1- \alpha \,.\] 
It follows from Proposition \ref{prop:dataDepIntervals} that the LHS is a valid $1- \alpha$ level confidence region. Hence, the right-hand side must be as well.

To conclude, we observe that our ellipsoid method must necessarily comprise a superset of a convex hull for some choice of $\mathcal{B}_{\alpha}$. Hence, our method will indeed generate valid confidence regions for the potential outcome variances. 

\end{proof}

%Item 2 (Proposition 4.2) can be addressed via an application of the Delta Method. In particular, the asymptotic normality of $\hat \mu^{(h)}$ and $\hat {\hat{\mu}}^{(h)}$ means that their analogues, $\hat \mu^{(h)}(1 - \hat \mu^{(h)})$ and $\hat {\hat{\mu}}^{(h)}(1 - \hat {\hat{\mu}}^{(h)}$, will also be asymptotically normal with a shared distribution. Hence, we can still obtain valid bounds by taking bootstrap quantiles. 

%Item 3 (Lemma 4.3) is also unchanged when considering variances instead of means. So the entire proof should essentially go through. 

%The only added complexity comes from the extension from confidence intervals to confidence regions, since we want estimates for both $\sigma_1^2 = \mu_1(1-\mu_1)$ and $\sigma_0^2 = \mu_0(1-\mu_0)$. Lemma 4.3 isn't directly applicable unless we simply take the Cartesian product of separate intervals computed for each parameter individually. This would yield rectangular confidence regions that are potentially larger than necessary.  

%Another approach is to generalize the proof of Lemma 4.3. We can, I believe, take the argument below and change it to deal with convex hull interiors rater than infima:

\section{Proof of Concavity of Minimax Problem}\label{app:concavityProof}
We begin with the unweighted case, and demonstrate concavity by direct computation of the Hessian. Define
\[ f(\sigma_{k}^2(\cdot)) = \frac{1}{n_r} \left( \sum_k \sigma_k(1) + \sigma_k(0) \right)^2 - \left(\sum_k  \frac{\sigma_k^2(1)}{\tilde n_{rkt}} +  \frac{\sigma_k^2(0)}{\tilde n_{rkc}} \right) \] 
The Hessian is given by 
\begin{align*}
\nabla^2 f &= \frac{1}{2n} \left(H + vv^T \right)
\end{align*}
where 
\[ H = \diag\left(- \frac{\sum_j \sigma_j(0) + \sigma_j(1)}{\sigma_k^3(t)} \right)_{k, t} \hspace{5mm} \text{ and } \hspace{5mm} v = \left(\frac{1}{\sigma_1^2(0)}, \frac{1}{\sigma_1^2(1)}, \dots, \frac{1}{\sigma_K^2(0)}, \frac{1}{\sigma_K^2(1)} \right)^T \,.\]

We want to consider the eigenvalues of $H + vv^T$. First, observe that at most one eigenvalue can be nonnegative. This follows from the famed Weyl Inequalities \cite{weyl1912asymptotic}. $H$ has all strictly negative eigenvalues, while $vv^T$, being an outer product, has one positive eigenvalue, $v^T v$, with all other eigenvalues 0. Denoting as $\lambda_i(G)$ the $i^{th}$ largest eigenvalue of matrix $G$, the Weyl Ineqalities tell us that 
\[ \lambda_2(H + vv^T) \leq \lambda_1(H) + \lambda_2(vv^T) = \lambda_1(H) < 0\,. \] 
Hence, only one non-negative eigenvalue is possible. 

Next, we can use the matrix determinant lemma to observe that
\[ \det(H + vv^T) = (1 + v^TH^{-1}v) \det(H) \] 
and direct computation tells us that 
\[ v^TH^{-1}v = -1 \,.\] 
Hence, the determinant is 0, meaning at least one of our eigenvalues must be zero. Combined with our prior result, this means our maximum eigenvalue must be zero and we conclude the Hessian is negative semidefinite. Thus, $f$ is indeed concave. 

Finally, note that the extension to the weighted case is straightforward. We can simply define new variables $\tilde \sigma_k(e) = \sqrt{w_k} \sigma_k(e)$ for $e \in \{0, 1\}$, and then repeat the proof above using the $\tilde \sigma_k(e)$ variables. Since $\sigma_k(e)$ is simply an affine transformation of $\tilde \sigma_k(e)$, concavity in the former follows from concavity in the latter.

\end{document}